%
\RequirePackage{mathtools}

\newif{\ifcs}
\cstrue%

\documentclass[runningheads]{llncs}

\usepackage{graphicx}
%

\usepackage{mathpartir}
\usepackage{enumitem}
\usepackage[bbgreekl]{mathbbol}
\usepackage{bbm}
\usepackage{amssymb}
\usepackage{tikz}
\usetikzlibrary{arrows,fit,calc}
\usepackage{array}
\usepackage{scalerel}
\usepackage{xparse}
\usepackage{thm-restate}
\usepackage{xspace}

\usepackage[pdftex,
plainpages=false,
pdfpagelabels,
linktoc=all,
bookmarks=true,
bookmarksopen=true,
breaklinks=true,
colorlinks,
linkcolor=black,
urlcolor=black,
citecolor=black]{hyperref}


\ifcs
\newlist{deflist}{enumerate*}{1}
\setlist[deflist]{label=(\roman*)}
\newlist{orlist}{enumerate*}{1}
\setlist[orlist]{label=(\alph*)}

\DeclareSymbolFont{stmry}{U}{stmry}{m}{n}
\DeclareMathDelimiter\llbracket{\mathopen}{stmry}{"4A}{stmry}{"71}
\DeclareMathDelimiter\rrbracket{\mathclose}{stmry}{"4B}{stmry}{"79}
\DeclareMathSymbol\llparenthesis\mathopen{stmry}{"4C}
\DeclareMathSymbol\rrparenthesis\mathclose{stmry}{"4D}
\DeclareMathSymbol\subsetplus\mathrel{stmry}{"44}
\DeclareMathSymbol\supsetplus\mathrel{stmry}{"45}

\DeclareMathAlphabet{\mathpzc}{OT1}{pzc}{m}{it}

\newcommand\commutative{}
\newcommand\union{sum\xspace}
\newcommand\unions{sums\xspace}

\newcommand\paren[1]{\left(#1\right)}
\renewcommand\brack[1]{\left[#1\right]}
\newcommand\set[1]{\left\{#1\right\}}
\newcommand\setcompr[2]{\left\{#1~\middle|~#2\right\}}
\newcommand\tuple[1]{\left\langle#1\right\rangle}

\newcommand\Mid{\mathrel{|}}
\newcommand\Coloneqq{\mathrel{\mathop{::}}=}
\newcommand\eqdef{\mathrel{\mathop{:}}=}

\newcommand\Nat{\mathbb{N}}
\newcommand\Rat{\mathbb{Q}}

\renewcommand\epsilon\varepsilon

\newcommand\sem[1]{\left\llbracket {#1}\right\rrbracket}

\newcommand\alphabet{\mathbb\Sigma}
\newcommand\Card[1]{{\#{#1}}}

\newcommand\reg[1][\alphabet]{\mathrm{Reg}_{#1}}
\newcommand\un{\mathbf 1}
\newcommand\zero{\mathbf 0}
\newcommand\tadd{\cdot}
\newcommand\tjoin{+}

\newcommand\tJoin{\sum}
\newcommand\titerate[1]{{#1}^\star}
\newcommand\ptiterate[1]{{\paren{#1}}^\star}

\newcommand\expr[1]{\brack{#1}_{\mathpzc{e}}}

\renewcommand\dim[1]{\text{dim}{\paren{#1}}}
\DeclareDocumentCommand\restre{ O{e} O{\Gamma} }{{#1}_{#2}}
\DeclareDocumentCommand\restreb{ O{e} O{\Gamma} }{\restre[#1][\bar{#2}]}
\newcommand\tsum{\prod}

\newcommand\jparen[1]{#1}

\DeclareDocumentCommand\Vectors{ O{\alphabet} D(){\Nat} }{{#2}^{#1}}
\newcommand\Base{\mathcal{B}}
\newcommand\BBase{\mathbf{B}}
\newcommand\Aase{\mathcal{A}}
\newcommand\Points[1][\Base]{{\mathbb{P}}{\tuple{#1}}}
\newcommand\tovec[2][\Base]{\left\langle\hspace{-2px}\middle|#2\middle|\hspace{-2px}\right\rangle_{#1}}
\newcommand\topoints[2][\BBase]{\brack{#2}^{#1}}
\newcommand\topointsc[3][\BBase]{\brack{#2}^{#1}_{#3}}
\newcommand\vu[1][]{\mathpzc{u}_{#1}}
\newcommand\vv[1][]{\mathpzc{v}_{#1}}
\newcommand\vU[1][]{\mathpzc{U}_{#1}}
\newcommand\vV[1][]{\mathpzc{V}_{#1}}
\newcommand\vecplus{\oplus}
\newcommand\scale[2]{{#2}^{#1}}
\newcommand\ituple[2]{\tuple{#1}_{#2}}
\newcommand\join{\cup}
\renewcommand\Join{\bigcup}
\newcommand\iterate[1]{{#1}^\star}

\renewcommand\vec[1]{\mathpzc{#1}}
\newcommand\unitvec{\epsilon}
\newcommand\letvec[1]{\left\{\hspace{-2.5px}\middle|#1\middle|\hspace{-2.5px}\right\}}
\newcommand\dv{\mathfrak{d}}
\newcommand\vsum{\bigoplus}

\newcounter{mysubtable}
\newcounter{myaxiom}
\newcounter{mylaw}
\newcounter{myeq}
\newcommand\axlabel[1]{%
  \stepcounter{myaxiom}%
  \label{#1}%
  \tag{\Alph{mysubtable}{\scriptsize{\arabic{myaxiom}}}}%
}

\newcommand\law[1]{%
  \stepcounter{mylaw}%
  \label{eq:#1}%
  \tag{E{\scriptsize{\arabic{mylaw}}}}%
}

\newcommand\eqproof[1]{%
  \stepcounter{myeq}%
  \label{eq:#1}%
  \tag{\fnsymbol{myeq}}%
}

\newenvironment{tableequations}{%
  \setcounter{myaxiom}{0}%
  \stepcounter{mysubtable}%
  \ignorespaces
}{%
  \ignorespacesafterend
}
\setcounter{mysubtable}{0}%
\numberwithin{equation}{section}


\else
\input{macros.tex}
\fi

\begin{document}
\title{A note on commutative Kleene algebra}
%
%
\author{Paul Brunet}
\authorrunning{P. Brunet}
%
\institute{%
  University College London\\
  \url{paul.brunet-zamansky.fr}\\
  \email{paul@brunet-zamansky.fr}
}

%
\maketitle              
\begin{abstract}
  In this paper we present a detailed proof of an important result of
  algebraic logic: namely that the free commutative Kleene algebra is
  the space of semilinear sets.
  The first proof of this result was proposed by Redko in 1964, and
  simplified and corrected by Pilling in his 1970 thesis.
  However, we feel that a new account of this proof is needed
  now. This result has acquired a particular importance in recent
  years, since it is a key component in the completeness proofs of
  several algebraic models of concurrent computations (bi-Kleene
  algebra, concurrent Kleene algebra...). 
  To that effect, we present a new proof of this result.
  \keywords{commutative Kleene algebra\and
    completeness theorem\and
    algebraic logic\and
    semilinear sets\and
    Parikh vectors.}
\end{abstract}

\section{Introduction}
\label{sec:intro}

In this paper we present a detailed proof of an important result of
algebraic logic: namely that the free commutative Kleene algebra is
the space of semilinear sets. This theorem is of central importance,
in particular because it is necessary to obtain the completeness of
concurrent variants of Kleene algebra, e.g. bi-Kleene
algebras~\cite{laurenceCompletenessTheoremsBiKleene2014}, concurrent
Kleene
algebras~\cite{hoareConcurrentKleeneAlgebra2009,kappeConcurrentKleeneAlgebra2018},
and the recently introduced ``concurrent Kleene algebras with
observations''~\cite{ckao}.

According to Daniel Krob~\cite{krobCompleteSystemBrational1990}: ``a
theorem of Redko from 1964
(see~\cite{redkoAlgebraCommutativeEvents1964}), whose proof was
simplified and corrected by Pilling
(see~\cite{pillingAlgebraOperatorsRegular1970}), gives a complete
identities system for the commutative rational expressions''. An
account of Pilling's proof was also included in Conway's 1971
book~\cite[Chapter 11]{conwayRegularAlgebraFinite2012}.

However we feel that an accessible proof of this result is missing
from this picture. To our knowledge, Redko's original proof, published
in Russian, has not been translated to English. Pilling and Conway's
proofs suffer from another drawback: these were done and published
before the theory of Kleene algebra was settled. Since then, basic
definitions and notations have diverged enough to render their
text difficult to read by contemporary mathematicians. 

In particular the axiomatisation that both Redko and Pilling prove
complete differ from the one used in
e.g.~\cite{laurenceCompletenessTheoremsBiKleene2014,kappeConcurrentKleeneAlgebra2018}.
Indeed, they both rely on infinite axiom schemes, namely for each
$k>0$ they include an identity:
\begin{equation}
  \titerate e\equiv \paren{\un\tjoin\scale 1 e\tjoin\cdots\tjoin\scale{\paren{k-1}}e}\tadd\ptiterate{\scale k e}.\eqproof{star-continuity}
\end{equation}
This principle may be understood as a limited form of counting modulo
$k$: every natural number may be written as the addition of a number
below $k$ and a factor of $k$. By contrast, we avoid the need for an
infinite axiomatisation by relying on an inference rule, in the style
of
e.g.~\cite{kozenCompletenessTheoremKleene1994,laurenceCompletenessTheoremsBiKleene2014,kappeConcurrentKleeneAlgebra2018}:
$$e\tadd x\leqq x\Rightarrow \titerate e\tadd x\leqq x.$$
We do need the family of identities~\eqref{eq:star-continuity} in the
proof. However, instead of postulating theses as axioms, we show that
they may be derived from our finitary axiomatisation.

Besides showing that the axiomatisation used by Redko and Pilling can
be derived from the more ``standard'' forms found in the literature,
we give a step-by-step description of the proof, highlighting the
techniques used and the key intermediary results that are needed for
this proof. We strive to use standard definitions as much as possible,
and to emphasise the relative difficulty of the various proof
steps. In our opinion, this proof could be used as a template for
formalisation in proof assistants, e.g. Coq or Isabelle.

The proof we present here loosely follows the strategy from Pilling's
PhD thesis, although we simplify some arguments and prove others in
more details.


The rest of the paper is organised as follows. In
Section~\ref{sec:def}, we lay down some definitions, and provide an
overview of the proof, identifying three main proof obligations. We
then devote some pages in Section~\ref{sec:prelim} to prove some
preliminary results. Finally, in
Sections~\ref{sec:decomp},~\ref{sec:incl-bases},
and~\ref{sec:inter-diff} we discharge the three remaining proof
obligations, thus finishing the proof of the main result.

\section{Definitions and overview of the proof}
\label{sec:def}

\subsection{Semi-linear sets}
\label{sec:semi-linear-sets}
A \emph{Parikh vector} is a $\alphabet$-indexed vector of natural
numbers. It can also be understood as a \emph{multiset} (i.e. a set
with multiplicity), or a \emph{commutative word} (i.e. a sequence
without order). 
The space of Parikh vectors is written $\Vectors$.
The vector with uniformly $0$ coordinate is written $\unitvec$.
Given a letter $a\in\alphabet$, we write $\letvec a$ for the vector with $1$ in coordinate $a$ and $0$ on every other coordinate.
The addition of two vectors $\vu,\vv\in\Vectors$, and the scalar multiplication of a number $n\in\Nat$ with a vector $\vu$ are defined coordinate-wise as usual:
\begin{mathpar}
  \vu\vecplus\vv\eqdef\ituple{\vu[a]+\vv[a]}{a\in\alphabet}\and
  \scale n \vu \eqdef\ituple{n\times\vu[a]}{a\in\alphabet}.
\end{mathpar}

Let $\Base=\set{\vec v_1,\cdots,\vec v_n}\subseteq\Vectors$ be a finite set of vectors, we call a $\Base$-point a vector $\vec\alpha\in\Points\eqdef\Nat^n$.  Such points can be interpreted as vectors by the function $\tovec\_:\Points\to\Vectors$ defined by
$\tovec\alpha\eqdef\vsum_{1\leqslant i\leqslant n}\scale{\alpha_i} {\vec v_i}$.

We now define the \emph{regular operators} on sets of vectors:
\begin{mathpar}
  \vU\vecplus\vV\eqdef\setcompr{\vu\vecplus\vv}{\vu\in\vU,\,\vv\in\vV}\and
  \vU\join\vV\eqdef\setcompr{\vu}{\vu\in\vU~or~\vv\in\vV}\and
  \iterate\vU\eqdef\setcompr{\vsum_{b\in\vU}\scale{\alpha_b} b}{\forall b\in\vU,\, \alpha_b\in\Nat}.
\end{mathpar}
Notice in particular that for a finite set $\Base$ we have $\iterate \Base=\setcompr{\tovec \alpha}{\alpha\in\Points}$.

The \emph{linear set generated by the vector $\vu$ and the finite set
  $\Base\subseteq\Vectors$} is defined by the following expression:
$$
\set\vu\vecplus\iterate\Base=\setcompr{\vu\vecplus\vsum_{b\in\Base}\scale{\alpha_b} b}{\forall b\in\Base, \alpha_b\in\Nat}.$$
A \emph{semilinear set} is a finite union of linear sets.


A finite set of vectors $\Base$ is called \emph{independent} if every
vector in $\Vectors$ has at most one decomposition in terms of the
vectors in $\Base$. In other words $\Base$ is independent iff
$\tovec \_$ is injective.

\subsection{Terms \& axioms}
\label{sec:terms}
Let $a,b,\dots\in\alphabet$ be a finite alphabet.
A \emph{\commutative regular expression} is a term generated by the following
grammar:
$$e,f\in\reg\Coloneqq \zero \Mid\un\Mid a\Mid e\tadd f\Mid e \tjoin f\Mid \titerate e.$$

Expressions may be immediately interpreted as sets of vectors:
\begin{align*}
  \sem \zero&\eqdef\emptyset
  &\sem\un&\eqdef\set\unitvec
  &\sem a&\eqdef\set{\letvec a}\\
  \sem {\titerate e}&\eqdef\iterate{\sem e}
  &\sem{e\tadd f}&\eqdef\sem e \vecplus\sem f
  &\sem {e\tjoin f}&\eqdef\sem e\join \sem f
\end{align*}

Notice that for any vector $\vec v\in\Vectors$, we may build an
expression $\expr {\vec v}\in\reg$ such that
$\sem{\expr{\vec v}}=\set{\vec v}$:
\begin{equation*}
\expr {\vec v}\eqdef
\tsum_{a\in\alphabet}\paren{\tsum_{1\leqslant j\leqslant \vec v_{a}} a}
=\begin{array}[t]{ccc}
  \underbrace{a\tadd\dots \tadd a}&\tadd\dots\tadd&\underbrace{z\tadd\dots\tadd z}\\
  \times\vec v_{a}&&\times\vec v_{z}
\end{array}.
\end{equation*}
Therefore, any semi-linear set may be represented as the semantics of
some \commutative regular expression.

\begin{table}[t]
  \centering
  \noindent%
  \begin{tableequations}
    \begin{minipage}[t]{.5\linewidth}
      \begin{align}
        e \tadd (f \tadd g) &\equiv (e \tadd f) \tadd g \axlabel{cka-seq-ass}\\
        e \tadd f &\equiv f \tadd e \axlabel{cka-seq-comm}\\
        \un \tadd e &\equiv e \axlabel{cka-seq-unit}\\
        e \tjoin (f \tjoin g) &\equiv (e \tjoin f) \tjoin g \axlabel{cka-plus-ass}\\
        e \tjoin f &\equiv f \tjoin e \axlabel{cka-plus-comm}\\
        e \tjoin e &\equiv e \axlabel{cka-plus-idem}
      \end{align}
    \end{minipage}%
    \begin{minipage}[t]{.5\linewidth}
      \begin{align}
        \zero \tjoin e &\equiv e \axlabel{cka-plus-unit}\\
        \zero \tadd e &\equiv\zero \axlabel{cka-seq-absorbing}\\
        e \tadd (f \tjoin g) &\equiv (e \tadd f) \tjoin (e \tadd g) \axlabel{cka-left-distr}\\
        \un\tjoin e\tadd \titerate e&\leqq \titerate e\axlabel{cka-star}\\
        e\tadd f\leqq f&\Rightarrow \titerate e\tadd f \leqq f\axlabel{cka-lfp}
      \end{align}
    \end{minipage}
  \end{tableequations}
  \caption{Axioms of commutative Kleene algebra}
  \label{tab:axioms}
\end{table}
We consider the \emph{axiomatic equivalence relation}
$\equiv$, defined as the smallest congruence on expressions
containing the axioms listed in Table~\ref{tab:axioms}. If $e\equiv f$
we say that $e$ is provably equal to $f$. We use the convention that
$e\leqq f$ means $e\tjoin f\equiv f $.
It is a simple exercise to check that each of these axioms is sound,
meaning that we have:
\begin{equation}
  \forall e,f\in\reg,\,e\equiv f\Rightarrow\sem e = \sem f.\label{eq:soundness}
\end{equation}
As we will prove in Section~\ref{sec:finite-complete}, for finitary expressions, i.e. expressions that do not use the operator $\titerate\_$, we also have completeness:
\begin{equation}
  \forall e,f\in\reg^{fin},\,e\equiv f\Leftrightarrow\sem e = \sem f.\label{eq:fincomplete}
\end{equation}
For this reason, we may (and will) dispense with the $\expr\_$
notation, and identify the vector $\vec u$ and the expression
$\expr{\vec u}$. We also identify a finite set $E$ of expressions (or
a finite set of vectors) with the expression $\tJoin_{e\in E} e$. This
does not introduce ambiguity, thanks to the properties of $\tjoin$, in
particular associativity~\eqref{cka-plus-ass},
commutativity~\eqref{cka-plus-comm}, and
idempotency~\eqref{cka-plus-idem}.

A \emph{linear} expression is a term $e\in\reg$ of the form
$e=\vec u\tadd\titerate{\Base}$, for some vector
$\vec u\in\Vectors$ and finite set $\Base\subseteq\Vectors$. A
\emph{semilinear} expression is a finite \union of linear expressions.
A linear expression $\vec u\tadd\titerate{\Base}$ is said to be
\emph{unambiguous} when $\Base$ is independent. 

The \emph{dimension} of a linear expression
$\vec u\tadd\titerate{ \Base}$ is the cardinal of $\Base$. The
dimension of a semilinear expression $e$, written $\dim e$, is the
maximum of the dimensions of the linear expressions composing it:
$$\dim{\tJoin_{i\in I}\vec u_i\tadd\titerate{ \Base_i}}\eqdef\max\setcompr{\Card{\Base_i}}{i\in I}.$$
\subsection{Overview of the completeness proof}
\label{sec:plan}

In this section we provide an overview of the proof that two
expressions that share the same semantics are provably equal. Since
$\leqq$ is antisymmetric with respect to $\equiv$, it is enough to
show that if the semantics of $e$ is contained in that of $f$, then
the inequality $e\leqq f$ is derivable from
\eqref{cka-seq-ass}-\eqref{cka-lfp}.

The first step of the proof is the following proposition:
\begin{restatable*}{proposition}{decomposition}\label{lemma:decomposition}
  Every \commutative regular expression is provably equal to a finite \union of unambiguous linear expressions.
\end{restatable*}
\begin{proof}[Sketch]
  To prove this result, we will need two steps, first splitting
  expressions into finite \unions of linear expressions (i.e. semilinear
  expressions), and then splitting single linear expressions into finite
  \unions of unambiguous expressions. This later step more technically
  involved, and relies on an induction on the dimension of semilinear
  terms.\qed
\end{proof}
\begin{remark}
  This entails that the sets of vectors generated by \commutative
  regular expressions are exactly the semilinear sets.
  Another consequence is that any semilinear set can be built as a
  finite union of linear sets generated by \emph{independent}
  families.
\end{remark}

We then discharge the case where $e$ and $f$ are both linear
(unambiguity does not play a role here). This proof is fairly
straightforward.
\begin{restatable*}{proposition}{baseInclusion}\label{lemma:baseInclusion}
  Given two linear expressions $e,f$, if $\sem e\subseteq \sem f$ then $e\leqq f$.
\end{restatable*}
Using Proposition~\ref{lemma:decomposition}, together with the fact
that~$\tjoin$ is a join operator with respect to~$\leqq$, we may
extend this seamlessly to the containment of an arbitrary expression
inside a single linear expression.
\begin{restatable*}{corollary}{corBaseInclusion}\label{cor:baseInclusion}
  For any terms $e,f$ such that $f$ is linear and
  $\sem e\subseteq \sem f$, then $e\leqq f$.
\end{restatable*}

We then arrive to the most subtle part of the proof:
\begin{restatable*}{proposition}{baseSplit}\label{lemma:baseSplit}
  Let $f$ be an unambiguous linear expression. For any expression $e$
  there are expressions $\brack{e\wedge f},\brack{e\setminus f}$ such that
  \begin{deflist}
  \item $e\equiv \brack{e\wedge f}\tjoin \brack{e\setminus f}$,
  \item $\sem{\brack{e\wedge f}}\subseteq\sem f$, and
  \item $\sem{\brack{e\setminus f}}\subseteq\sem e\setminus\sem f$.
  \end{deflist}
\end{restatable*}
\begin{proof}[Sketch]
  To prove this proposition, we first discharge the case where
  $e=\vec u\tadd\titerate\Aase$ and $f=\vec v\tadd\titerate\Base$ are
  such that $\Aase\subseteq\iterate \Base$. We call this situation
  ``$e$ is compatible with $f$'', and prove it by induction on the
cardinality of $\Aase$.

Given a fixed unambiguous linear expression
$f=\vec v\tadd\titerate\Base$, we then show that any expression $e$
can be split into a \union $e'\tjoin e''$, where $e'$ is a \union of
compatible expressions, and
$\sem {e''}\subseteq\sem e\setminus\sem f$.  For this task, we
consider $\Vectors$ as a subset of $\Vectors(\Rat)$, and use this
point of view to extend the independent family $\Base$ into a basis
$\BBase$ of the space $\Vectors(\Rat)$. This allows to to have a
bijection between $\Vectors(\Rat)$ and the rational $\BBase$-points,
meaning in particular that every vector $\vec u\in\Vectors$ has unique
``coordinates'' with respect to the vectors $\vec b\in\BBase$. We may
now obtain a characterisation of $\iterate\Base$ in terms of these
coordinates: $\vec u\in\iterate \Base$ iff $\vec u$ has positive
integer coordinates for each $\vec b\in\Base$, and $0$ coordinates for
each $\vec b\in\BBase\setminus \Base$. Thanks to the properties of
$\Nat$ inside $\Rat$, we know that for any vector $\vec u\in\Base$,
there is a number $n\in\Nat$ such that $\scale n{\vec u}$ has integer
coordinates for each $\vec b\in\BBase$. Therefore the crux of the
argument revolves around the \emph{sign} of the coordinates.

The most challenging lemma of this development tackles this very
question. It states that every expression may be rewritten as a \union
$\tJoin_i\vec u_i\tadd\titerate{\Aase_i}$ where each $\Aase_i$ is
\emph{homogeneous}. A family of vectors $\Aase$ is called homogeneous
if for each $\vec b\in\BBase$, either every vector in $\Aase$ has
uniformly positive $\vec b$-coordinates, or uniformly negative ones.
In the first case of positive $\Base$-coordinates and null
$\paren{\BBase\setminus\Base}$-coordinates, the expression may be
massaged into a compatible form. In the other cases, we may rewrite
the expression into the \union of an expression contained in
$\sem e\setminus\sem f$, and one of strictly smaller dimension. We may
therefore conclude the proof of Proposition~\ref{lemma:baseSplit} by
an induction on the dimension.\qed
\end{proof}
Using these results, we may conclude our development:
\begin{theorem}[Completeness of commutative Kleene algebra]
  
  \noindent%
  The axioms \eqref{cka-seq-ass}-\eqref{cka-lfp} are sound and
  complete for the equational theory of semilinear sets.
\end{theorem}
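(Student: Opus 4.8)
The plan is to combine the three preceding results into the biconditional $e \equiv f \Leftrightarrow \sem e = \sem f$. The left-to-right direction is exactly soundness, which is already in hand as~\eqref{eq:soundness}, so the work lies in completeness. Here I would exploit the antisymmetry of $\leqq$ with respect to $\equiv$: an equality $\sem e = \sem f$ amounts to the two containments $\sem e \subseteq \sem f$ and $\sem f \subseteq \sem e$, so it suffices to establish the single implication
$$\sem e \subseteq \sem f \;\Rightarrow\; e \leqq f,$$
apply it twice to obtain both $e \leqq f$ and $f \leqq e$, and conclude $e \equiv f$ through the chain $e \equiv f \tjoin e \equiv e \tjoin f \equiv f$, using only commutativity~\eqref{cka-plus-comm} and the definition of $\leqq$.

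To prove this implication I would first rewrite $f$, by Proposition~\ref{lemma:decomposition}, as a finite \union $\tJoin_{j=1}^{m} f_j$ of \emph{unambiguous} linear expressions; since $f \equiv \tJoin_j f_j$, soundness turns the hypothesis into $\sem e \subseteq \Join_{j=1}^{m}\sem{f_j}$. The heart of the argument is then an induction on the number $m$ of linear components, proving: whenever $\sem e \subseteq \Join_{j=1}^{m}\sem{f_j}$ with every $f_j$ unambiguous and linear, we have $e \leqq \tJoin_{j} f_j$. The inductive step peels off the component $f_m$ by applying Proposition~\ref{lemma:baseSplit} with the unambiguous linear expression $f_m$ --- this is exactly the hypothesis that makes the splitting available --- yielding $\brack{e \wedge f_m}$ and $\brack{e \setminus f_m}$ with $e \equiv \brack{e \wedge f_m}\tjoin \brack{e \setminus f_m}$.

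The two summands are then discharged independently. Because $\sem{\brack{e \wedge f_m}} \subseteq \sem{f_m}$ and $f_m$ is linear, Corollary~\ref{cor:baseInclusion} gives $\brack{e \wedge f_m} \leqq f_m \leqq \tJoin_j f_j$. For the remainder, $\sem{\brack{e \setminus f_m}} \subseteq \sem e \setminus \sem{f_m} \subseteq \Join_{j=1}^{m-1}\sem{f_j}$, so the induction hypothesis supplies $\brack{e \setminus f_m} \leqq \tJoin_{j=1}^{m-1} f_j \leqq \tJoin_j f_j$. Since $\tjoin$ is a join for $\leqq$, the two bounds combine into $e \equiv \brack{e \wedge f_m}\tjoin \brack{e \setminus f_m} \leqq \tJoin_j f_j$, completing the step. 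The base case $m = 0$ asks that $\sem e = \emptyset$ imply $e \equiv \zero$: applying Proposition~\ref{lemma:decomposition} again, $e$ decomposes as a \union of linear expressions, each of which contains its own generating vector and is hence non-empty, so an empty semantics forces the empty \union, which is $\zero$ by~\eqref{cka-plus-unit}.

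I do not anticipate a real obstacle at this final stage: all the genuine difficulty has been absorbed into Propositions~\ref{lemma:decomposition},~\ref{lemma:baseInclusion} and~\ref{lemma:baseSplit}, and what remains is a clean bookkeeping induction. The two points that merit attention are the detour through Proposition~\ref{lemma:decomposition} to guarantee that each component of $f$ is unambiguous, as required by Proposition~\ref{lemma:baseSplit}, and the small argument that an empty semantics forces provable equality to $\zero$, which anchors the induction.
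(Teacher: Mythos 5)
Your proposal is correct and follows essentially the same route as the paper: reduce to a single containment by antisymmetry, decompose $f$ into unambiguous linear components via Proposition~\ref{lemma:decomposition}, peel them off one at a time with Proposition~\ref{lemma:baseSplit} (your induction on $m$ is exactly the paper's iterative construction of the $g_i$ and $e_i$), and bound each intersection piece with Corollary~\ref{cor:baseInclusion}. The only cosmetic divergence is the base case: you derive $\sem e=\emptyset\Rightarrow e\equiv\zero$ from the decomposition into linear expressions (each of which is nonempty), whereas the paper proves this by a direct structural induction on $e$ in Appendix~\ref{sec:proofs:2.3}; both arguments are sound.
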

\begin{proof}
  Since soundness is straightforward, we will only focus on
  completeness. As we noticed earlier, we may restrict our attention
  to \emph{inclusion} rather that \emph{equivalence}, relying on
  antisymmetry to conclude.

  Let $e,f\in\reg$ be two expressions such that the semantics of $e$
  is contained in that of $f$. Using
  Proposition~\ref{lemma:decomposition} we may rewrite $f$ as a finite
  \union of unambiguous expressions $\tJoin_{1\leqslant i\leqslant n} f_i$.  We then leverage
  Proposition~\ref{lemma:baseSplit}, to decompose $e$ in terms of the
  $f_i$ as follows:
  \begin{mathpar}
    g_0\eqdef e\and
    g_{i+1}\eqdef \brack{g_i\setminus f_{i+1}}\and
    e_{i+1}\eqdef \brack{g_i\wedge f_{i+1}}.
  \end{mathpar}
  By construction, observe that we have
  \begin{mathpar}
    g_i\equiv e_{i+1}\tjoin g_{i+1}\and
    \sem{e_{i+1}}\subseteq \sem{f_{i+1}}\and
    \sem{g_{i+1}}\subseteq \sem {g_i}\setminus\sem{f_{i+1}}.
  \end{mathpar}
  Therefore we obtain that $e\equiv g_0\equiv e_1\tjoin g_1\equiv\cdots\equiv e_1\tjoin\cdots\tjoin e_n\tjoin g_n$ and:
  \begin{equation*}
    \sem{g_n}\subseteq\sem e\setminus\sem {f_1}\setminus\sem{f_2}\dots\setminus\sem{f_n}=\sem e\setminus \sem f.
  \end{equation*}
  Since we assumed $\sem e\subseteq\sem f$, we know that
  $\sem {g_n}\subseteq\sem e\setminus \sem f=\emptyset$.  We may prove
  by induction on $g_n$ that $g_n\equiv \zero$ (a proof is provided in
  Appendix~\ref{sec:proofs:2.3}).
  
  To conclude, we use Proposition~\ref{lemma:baseInclusion} to show
  that for each $i$, $e_i\leqq f_i$, thus showing that
  \begin{equation*}
    e\equiv \tJoin_i e_i\tjoin g_n\leqq \tJoin_i f_i\tjoin\zero\equiv f.\tag*{\qed}
  \end{equation*}

\end{proof}
\section{Preliminary results}
\label{sec:prelim}

\subsection{Completeness in the finite case}
\label{sec:finite-complete}
In this section we prove the obvious. The point is to make explicit
the techniques and steps that are necessary, or at least useful, to
establish statements that are instrumental for the main proof of this
paper.

\begin{lemma}\label{lem:add-vect}
  $\forall\vec u,\vec v\in\Vectors,\,\expr{\vec u\vecplus\vec v}\equiv\expr{\vec u}\tadd\expr{\vec v}$.
\end{lemma}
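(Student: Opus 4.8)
The plan is to prove that $\expr{\vec u \vecplus \vec v} \equiv \expr{\vec u} \tadd \expr{\vec v}$ by unpacking the definition of the expression-building map $\expr{\_}$ and using the algebraic axioms for $\tadd$ — chiefly associativity~\eqref{cka-seq-ass}, commutativity~\eqref{cka-seq-comm}, and the unit law~\eqref{cka-seq-unit}. Recall that $\expr{\vec w}$ is defined as $\tsum_{a \in \alphabet}\paren{\tsum_{1 \leqslant j \leqslant \vec w_a} a}$, i.e. a $\tadd$-product in which each letter $a$ occurs exactly $\vec w_a$ times. Since $\paren{\vec u \vecplus \vec v}_a = \vec u_a + \vec v_a$ by the coordinate-wise definition of $\vecplus$, the expression $\expr{\vec u \vecplus \vec v}$ contains each letter $a$ exactly $\vec u_a + \vec v_a$ times, which is precisely the multiset-union of the letters appearing in $\expr{\vec u}$ and $\expr{\vec v}$. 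The entire content of the lemma is thus that these two $\tadd$-products, built from the same multiset of letters but grouped and ordered differently, are provably equal.

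First I would establish, as the core combinatorial fact, that for any finite multiset of letters the value of the iterated $\tadd$-product is independent of the order and bracketing in which the factors are combined — this is exactly what associativity and commutativity of $\tadd$ buy us, modulo $\equiv$. Concretely, I would argue that both $\expr{\vec u \vecplus \vec v}$ and $\expr{\vec u}\tadd\expr{\vec v}$ are $\tadd$-products over the combined multiset $\setcompr{a^{\vec u_a + \vec v_a}}{a \in \alphabet}$, and appeal to a normalisation argument: any two expressions formed by $\tadd$-combining the same multiset of atomic factors are $\equiv$-equal. The cleanest way to make this precise is by induction on the total number of letters, $\sum_{a \in \alphabet}\vec u_a$, peeling off one factor at a time from $\expr{\vec u}$ and threading it through the product $\expr{\vec v}$ using commutativity and associativity until it lands in its canonical position.

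An alternative, perhaps more modular route is a double induction. I would first prove the single-letter version — that $\expr{\scale{(m+n)}{\letvec a}} \equiv \expr{\scale m {\letvec a}} \tadd \expr{\scale n {\letvec a}}$, i.e. that $m+n$ copies of $a$ split as $m$ copies times $n$ copies — which is immediate by induction on $m$ using associativity, with the base case $m = 0$ handled by the unit law~\eqref{cka-seq-unit} after observing $\expr{\scale 0 {\letvec a}} = \un$. I would then lift this letter-by-letter across the alphabet: writing $\expr{\vec w}$ as the $\tadd$-product of its per-letter blocks $\expr{\scale{\vec w_a}{\letvec a}}$, the blocks for $\vec u \vecplus \vec v$ regroup into the blocks for $\vec u$ and for $\vec v$ by repeated commuting of the interleaved $a$-blocks past one another.

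The main obstacle — though it is more bookkeeping than genuine difficulty — is that neither $\expr{\vec u}$ nor the definition of $\tadd$-iterated products comes with a built-in guarantee that reassociation and reordering preserve $\equiv$; that guarantee has to be manufactured explicitly from the three monoid axioms, since $\expr{\_}$ is defined only up to a fixed (but arbitrary) choice of ordering and bracketing of factors. Stating and discharging the auxiliary ``permutation invariance of $\tadd$-products'' lemma cleanly is therefore the real work, after which the present lemma follows by matching multisets. Everything here is routine once that invariance is in hand, so I expect this to be the genuinely trivial base of the tower of results, included (as the authors note) to fix notation and technique for the harder lemmas to come.
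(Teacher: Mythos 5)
Your proposal is correct, and your second (``double induction'') route is essentially the paper's own proof: the paper proceeds by induction on the alphabet, with the single-letter case $\expr{\scale{(n+m)}{a}}\equiv\expr{\scale n a}\tadd\expr{\scale m a}$ discharged by associativity~\eqref{cka-seq-ass} as the base case, and the inductive step peeling off one letter's block and commuting it into place via \eqref{cka-seq-ass} and \eqref{cka-seq-comm}, exactly as you describe. Your first route (a general permutation-invariance lemma for $\tadd$-products) is a mild repackaging of the same content and would also work, but the paper does not bother to isolate it.
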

\begin{proof}
  By induction on $\alphabet$:
  \begin{description}
  \item[$\blacktriangleright \alphabet=\set{a}$:]
    in this case, we only need to use associativity of $\tadd$, i.e. axiom~\eqref{cka-seq-ass}, to prove that $\expr{\scale {\paren{n+m}} a}\equiv\expr{\scale n a}\tadd\expr{\scale m a}$.
  \item[$\blacktriangleright \alphabet=\set{a}\uplus\alphabet'$:]
    in this case, we have:
    \begin{mathpar}
      \expr{\vec u}=\scale {\vec u_a}a\tadd\expr{\vec u'}\and
      \expr{\vec v}=\scale {\vec v_a}a\tadd\expr{\vec v'}\and
      \expr{\vec {u\vecplus v}}=\scale {\paren{\vec u_a+\vec v_a}}a\tadd\expr{\vec u'\vecplus\vec v'}
    \end{mathpar}
    where $\vec u',\vec v'\in\Vectors[\alphabet']$.
    By induction we get $\expr{\vec u'\vecplus\vec v'}\equiv\expr{\vec u'}\tadd\expr{\vec v'}$.
    \begin{align*}
      \expr{\vec u\vecplus\vec v}
      =\scale {\paren{\vec u_a+\vec v_a}}a\tadd\expr{\vec u'\vecplus\vec v'}
      &\equiv\scale {\paren{\vec u_a+\vec v_a}}a\tadd\expr{\vec u'}\tadd\expr{\vec v'}\tag{by~I.H.}\\
      &\equiv\scale {\vec u_a}a\tadd\expr{\vec u'}\tadd\scale{\vec v_a}a\tadd\expr{\vec v'}
        \tag{by~\ref{cka-seq-ass},\ref{cka-seq-comm}}\\
      &=\expr{\vec u}\tadd\expr{\vec v}.\tag*{\qed}
    \end{align*}
  \end{description}
\end{proof}

\begin{restatable}{lemma}{splitfinite}\label{lem:split-finite}
  For every expression in $\reg^{fin}$, it holds that
  $e\equiv \tJoin_{\vec v\in\sem e}\expr{\vec v}$.
\end{restatable}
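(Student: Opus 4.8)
The plan is to proceed by structural induction on the finitary expression $e\in\reg^{fin}$, following the grammar $\zero\Mid\un\Mid a\Mid e\tadd f\Mid e\tjoin f$. The three base cases are immediate. For $e=\zero$ the semantics is empty, so the right-hand side is the empty \union, which equals $\zero$ by our convention identifying $\emptyset$ with $\tJoin_{\vec v\in\emptyset}\expr{\vec v}$ (the bottom element of the join-semilattice, \eqref{cka-plus-unit}). For $e=\un$ and $e=a$ the semantics is the singleton $\set\unitvec$, respectively $\set{\letvec a}$, and one checks directly that $\expr\unitvec\equiv\un$ and $\expr{\letvec a}\equiv a$: in both cases the defining product for $\expr{\vec v}$ degenerates into a product of $\un$'s (one genuine $a$ factor in the second case), which collapses under the unit law~\eqref{cka-seq-unit}.

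For the \union case $e=e_1\tjoin e_2$, the induction hypotheses give $e_i\equiv\tJoin_{\vec v\in\sem{e_i}}\expr{\vec v}$, so that
$$e_1\tjoin e_2\equiv\paren{\tJoin_{\vec v\in\sem{e_1}}\expr{\vec v}}\tjoin\paren{\tJoin_{\vec v\in\sem{e_2}}\expr{\vec v}}\equiv\tJoin_{\vec v\in\sem{e_1}\join\sem{e_2}}\expr{\vec v},$$
the last step merging the two \unions over the set-theoretic union $\sem{e_1\tjoin e_2}=\sem{e_1}\join\sem{e_2}$. This is exactly the content of the identification of finite sets with their $\tjoin$-\union, and uses only associativity~\eqref{cka-plus-ass}, commutativity~\eqref{cka-plus-comm}, and idempotency~\eqref{cka-plus-idem}, the last of which silently discards the vectors lying in both $\sem{e_1}$ and $\sem{e_2}$.

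The product case $e=e_1\tadd e_2$ is the crux, and I expect it to be the only real obstacle. Substituting the induction hypotheses, I would distribute $\tadd$ over the two \unions: left-distributivity is axiom~\eqref{cka-left-distr}, and right-distributivity follows from it together with commutativity of $\tadd$, \eqref{cka-seq-comm}; iterating, $\tadd$ distributes over arbitrary finite \unions, with the empty case $e\tadd\zero\equiv\zero$ handled by~\eqref{cka-seq-absorbing} and~\eqref{cka-seq-comm}. This yields
$$e_1\tadd e_2\equiv\tJoin_{\vec u\in\sem{e_1}}\tJoin_{\vec w\in\sem{e_2}}\paren{\expr{\vec u}\tadd\expr{\vec w}}\equiv\tJoin_{\vec u\in\sem{e_1}}\tJoin_{\vec w\in\sem{e_2}}\expr{\vec u\vecplus\vec w},$$
the second step being Lemma~\ref{lem:add-vect}.

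To finish the product case, I would observe that as the pair $(\vec u,\vec w)$ ranges over $\sem{e_1}\times\sem{e_2}$, the sum $\vec u\vecplus\vec w$ ranges over $\sem{e_1}\vecplus\sem{e_2}=\sem{e_1\tadd e_2}$, so the double \union is a \union of the expressions $\expr{\vec v}$ for $\vec v\in\sem{e_1\tadd e_2}$, each possibly occurring several times; idempotency~\eqref{cka-plus-idem} collapses these repetitions and delivers the claim. The genuine difficulty here, as opposed to the \union case, is that the addition map $(\vec u,\vec w)\mapsto\vec u\vecplus\vec w$ need not be injective, so the passage from a \union indexed by pairs to a \union indexed by the \emph{set} $\sem{e_1\tadd e_2}$ is not a mere reindexing but really relies on idempotency; one must also take care that distributivity behaves correctly when one factor has empty semantics, which is why the absorbing law is invoked explicitly. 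Everything else is routine bookkeeping with the commutative-monoid structure of $\paren{\tjoin,\zero}$.
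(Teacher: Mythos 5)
Your proof is correct and follows essentially the same route as the paper's: structural induction with the base cases immediate, the \union case handled by the semilattice laws, and the product case by distributivity followed by Lemma~\ref{lem:add-vect}. You are in fact slightly more careful than the paper in spelling out why idempotency is needed when the addition map $(\vec u,\vec w)\mapsto\vec u\vecplus\vec w$ is not injective, a point the paper's appendix proof passes over silently.
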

The proof of this lemma proceeds by a straightforward induction on
expressions. We make this explicit in Appendix~\ref{sec:proofs:3.1}

\begin{lemma}\label{lem:ax-mem}
  For any vector $\vec u\in\Vectors$ and any expression $e\in\reg$, we have:
  $$\vec u\in\sem e\Rightarrow\expr {\vec u}\leqq e.$$
\end{lemma}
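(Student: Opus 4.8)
The plan is to proceed by structural induction on the expression $e$, leaning on Lemma~\ref{lem:add-vect} to treat products and on the star axiom \eqref{cka-star} to treat iteration. The base cases are immediate: for $e=\zero$ the hypothesis $\vec u\in\sem\zero=\emptyset$ is absurd, so the implication holds vacuously; for $e=\un$ and $e=a$ the set $\sem e$ is the singleton $\set\unitvec$ (resp. $\set{\letvec a}$), so $\vec u$ is forced and $\expr{\vec u}$ equals $\un$ (resp. $a$), whence $\expr{\vec u}\leqq e$ reduces to reflexivity, which follows from idempotency \eqref{cka-plus-idem}.

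Before the inductive cases I would record two standard consequences of the axioms that get used repeatedly. First, $\tjoin$ is a join for $\leqq$ — in particular $f,g\leqq f\tjoin g$ and $\leqq$ is transitive — which follows from \eqref{cka-plus-ass}--\eqref{cka-plus-unit}. Second, $\tadd$ is monotone in both arguments for $\leqq$, which follows from left distributivity \eqref{cka-left-distr} together with commutativity \eqref{cka-seq-comm}. With these in hand, the case $e=f\tjoin g$ is routine: $\vec u\in\sem f\join\sem g$ places $\vec u$ in $\sem f$ or in $\sem g$, and the induction hypothesis combined with $f,g\leqq f\tjoin g$ yields $\expr{\vec u}\leqq e$. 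The case $e=f\tadd g$ uses that $\vec u=\vec v\vecplus\vec w$ with $\vec v\in\sem f$ and $\vec w\in\sem g$; the induction hypothesis gives $\expr{\vec v}\leqq f$ and $\expr{\vec w}\leqq g$, so by Lemma~\ref{lem:add-vect} and monotonicity of $\tadd$ we obtain $\expr{\vec u}\equiv\expr{\vec v}\tadd\expr{\vec w}\leqq f\tadd g=e$.

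The main obstacle is the iteration case $e=\titerate f$. Here $\vec u\in\iterate{\sem f}$ means that $\vec u$ is a finite sum $\vec w_1\vecplus\cdots\vecplus\vec w_k$ of vectors $\vec w_i\in\sem f$, the empty sum ($k=0$) giving $\vec u=\unitvec$. The key auxiliary fact is that $\scale k f\leqq\titerate f$ for every $k\in\Nat$, which I would establish by a side induction on $k$: the base case $\scale 0 f=\un\leqq\titerate f$ and the unfolding $f\tadd\titerate f\leqq\titerate f$ both follow directly from \eqref{cka-star}, and the step $\scale{k+1}f=f\tadd\scale k f\leqq f\tadd\titerate f\leqq\titerate f$ combines these with monotonicity of $\tadd$ and transitivity. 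Granting this, the induction hypothesis supplies $\expr{\vec w_i}\leqq f$ for each $i$, so by Lemma~\ref{lem:add-vect} and monotonicity $\expr{\vec u}\equiv\expr{\vec w_1}\tadd\cdots\tadd\expr{\vec w_k}\leqq\scale k f\leqq\titerate f$, closing the induction.

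I expect the only genuinely delicate point to be the bookkeeping that converts the formal $\Nat$-combination witnessing membership in $\iterate{\sem f}$ into the finite list $\vec w_1,\dots,\vec w_k$; once that is made precise, everything reduces to repeated applications of the recorded monotonicity laws and the star axiom \eqref{cka-star}.
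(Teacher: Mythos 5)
Your proof is correct and follows essentially the same route as the paper: structural induction on $e$, with Lemma~\ref{lem:add-vect} handling the $\tadd$ case and the decomposition $\vec u=\vec w_1\vecplus\cdots\vecplus\vec w_k$ with $\scale k f\leqq\titerate f$ handling the star case (the paper writes this as $\expr{\vec u}\leqq f\tadd\cdots\tadd f\leqq\titerate f$ without spelling out the side induction you make explicit). The only difference is presentational: you record the monotonicity and join facts up front, which the paper uses silently.
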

\begin{proof}
  By induction on $e$:
  \begin{description}
  \item[$\blacktriangleright \zero,\,\un,\,a$:] these cases hold trivially.
  \item[$\blacktriangleright f\tjoin g$:] $\vec u\in\sem{f\tjoin g}=\sem f\join\sem g$ implies that either $\vec u\in\sem f$, in which case we have $\expr{\vec u}\leqq f\leqq f\tjoin g$, or $\vec u\in\sem g$, in which case we have $\expr{\vec u}\leqq g\leqq f\tjoin g$
  \item[$\blacktriangleright f\tadd g$:] $\vec u\in\sem{f\tadd g}=\sem f\vecplus\sem g$ implies that there are vectors $\vec v,\vec w$ such that
    $\vec u =\vec v\vecplus\vec w$, $\vec v \in\sem f$, and $\vec w\in\sem g$.
    We conclude this case:
    \begin{align*}
      \expr{\vec u}
      =\expr{\vec v\vecplus\vec w}
      &\equiv\expr{\vec v}\tadd\expr{\vec w}\tag{By Lemma~\ref{lem:add-vect}}\\
      &\leqq f\tadd g\tag{By I.H.}
    \end{align*}
  \item[$\blacktriangleright \titerate f$:] since,
    $\vec u\in\sem{\titerate f}=\iterate{\sem f}$, $\vec u$ may be
    decomposed as $\vec u =\vec u_1\vecplus\cdots\vecplus\vec u_n$,
    with $\forall i,\vec u_i \in\sem f$. We conclude this proof:
    \begin{align*}
      \expr{\vec u}
      =\expr{\vec u_1\vecplus\cdots\vecplus\vec u_n}
      &\equiv\expr{\vec u_1}\tadd\cdots\tadd\expr{\vec u_n}
        \tag{By Lemma~\ref{lem:add-vect}}\\
      &\leqq f\tadd\cdots\tadd f \tag{By I.H.}\\
      &\leqq \titerate f\tag*{\qed}
    \end{align*}
  \end{description}
\end{proof}

\begin{corollary}\label{cor:fin-incl}
  $\forall e\in\reg^{fin},\forall f\in\reg,\,\sem e\subseteq\sem f\Rightarrow e\leqq f.$
\end{corollary}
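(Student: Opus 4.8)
The plan is to reduce the set inclusion $\sem e\subseteq\sem f$ to finitely many individual membership facts, each of which is already discharged by Lemma~\ref{lem:ax-mem}. The observation that makes this work is that for a finitary $e$ the semantics $\sem e$ is a \emph{finite} set of vectors (the star-free operators $\vecplus$ and $\join$ preserve finiteness), so Lemma~\ref{lem:split-finite} applies and decomposes $e$ into a finite \union of single-vector expressions.

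Concretely, I would first invoke Lemma~\ref{lem:split-finite} to obtain $e\equiv\tJoin_{\vec v\in\sem e}\expr{\vec v}$. Then, for each $\vec v\in\sem e$, the hypothesis $\sem e\subseteq\sem f$ gives $\vec v\in\sem f$, so that Lemma~\ref{lem:ax-mem} yields $\expr{\vec v}\leqq f$. It only remains to collapse these finitely many inequalities into the single inequality $\tJoin_{\vec v\in\sem e}\expr{\vec v}\leqq f$, after which $e\leqq f$ follows by combining with the equivalence from the first step.

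For the collapsing step I would record the elementary fact that $\tjoin$ is a join operator for $\leqq$: since $\zero\leqq f$ holds by~\eqref{cka-plus-unit}, and since $a\leqq f$ together with $b\leqq f$ gives $a\tjoin b\leqq f$ via~\eqref{cka-plus-ass}, \eqref{cka-plus-idem} and the defining convention $e\leqq f\Leftrightarrow e\tjoin f\equiv f$, a trivial induction over the finite set $\sem e$ delivers the desired upper bound. I expect no real obstacle here: the entire content is the finiteness of $\sem e$ in the finitary case, which licenses the split, together with the standard observation that a finite join lies below any common upper bound. In effect this corollary is just the closure of Lemma~\ref{lem:ax-mem} under finite \unions.
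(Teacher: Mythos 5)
Your proof is correct and follows essentially the same route as the paper: split $e$ via Lemma~\ref{lem:split-finite}, apply Lemma~\ref{lem:ax-mem} to each vector in $\sem e$, and collapse the finite \union using the join properties of $\tjoin$ (the paper writes this last step as $\tJoin_{\vec v\in\sem e}\expr{\vec v}\leqq\tJoin_{\vec v\in\sem e} f\equiv f$, which is the same elementary fact you establish by induction).
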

\begin{proof}
  Thanks to Lemma~\ref{lem:split-finite}, we know that:
  $e\equiv \tJoin_{\vec v\in\sem e}\expr{\vec v}$.
  Since $\sem e\subseteq\sem f$, by Lemma~\ref{lem:ax-mem} we know that for all $\vec v\in\sem e$, we can derive $\expr v\leqq f$. Therefore, we get the following proof:
  \begin{equation*}
    e\equiv \tJoin_{\vec v\in\sem e}\expr{\vec v}\leqq\tJoin_{\vec v\in\sem e} f\equiv f
    \tag*{\qed}
  \end{equation*}
\end{proof}
\begin{corollary}\label{cor:completeness-fin}
  $\forall e,f\in\reg^{fin},\,e\equiv f\Leftrightarrow\sem e = \sem f.$
\end{corollary}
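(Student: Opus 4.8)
The plan is to prove the two directions of the biconditional separately, observing that neither requires any new machinery beyond what has already been assembled; indeed this corollary is exactly the claim~\eqref{eq:fincomplete} announced in the overview. The left-to-right implication, $e\equiv f\Rightarrow\sem e=\sem f$, is precisely soundness, which holds for all expressions and in particular for finitary ones, so I would simply invoke~\eqref{eq:soundness}.

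The substantive direction is right-to-left. Here I would unfold the set equality $\sem e=\sem f$ into the two inclusions $\sem e\subseteq\sem f$ and $\sem f\subseteq\sem e$. Because both $e$ and $f$ lie in $\reg^{fin}$, Corollary~\ref{cor:fin-incl} applies symmetrically: the first inclusion yields $e\leqq f$ and the second yields $f\leqq e$.

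It then remains to combine these two derivable inequalities into a provable equality, which amounts to the antisymmetry of $\leqq$ with respect to $\equiv$. Recalling the convention that $e\leqq f$ abbreviates $e\tjoin f\equiv f$, I would chain $e\equiv f\tjoin e$ (from $f\leqq e$), then $f\tjoin e\equiv e\tjoin f$ (by commutativity~\eqref{cka-plus-comm}), then $e\tjoin f\equiv f$ (from $e\leqq f$), yielding $e\equiv f$ as required.

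Since every step reduces to an already-established result --- soundness, Corollary~\ref{cor:fin-incl}, and a one-line appeal to~\eqref{cka-plus-comm} --- I do not expect a genuine obstacle here. The only point demanding care is that Corollary~\ref{cor:fin-incl} requires the \emph{smaller} (left-hand) expression to be finitary; this hypothesis is met on both applications precisely because we have restricted attention to $\reg^{fin}$, and it would fail if one of $e,f$ used the $\titerate{\_}$ operator. All the real difficulty was discharged earlier, in Lemmas~\ref{lem:add-vect},~\ref{lem:split-finite}, and~\ref{lem:ax-mem}.
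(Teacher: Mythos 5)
Your proposal is correct and follows exactly the paper's own argument: soundness for the left-to-right direction, and two symmetric applications of Corollary~\ref{cor:fin-incl} combined by antisymmetry of $\leqq$ for the converse. The only difference is that you spell out the antisymmetry chain via~\eqref{cka-plus-comm}, which the paper leaves implicit.
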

\begin{proof}
  The left to right implication is soundness, which we already stated
  to hold for any expressions. For the converse direction, notice that
  $e\equiv f\Leftrightarrow e\leqq f \wedge f\leqq e$, so we obtain
  the desired entailment by two applications of
  Corollary~\ref{cor:fin-incl}.  \qed
\end{proof}

\subsection{Laws of commutative Kleene algebra}
\label{sec:stuff}
Omitted proofs from this section are provided in Appendix~\ref{sec:proofs:3.2}.

The following are laws of Kleene algebra. Since a
commutative Kleene algebra is in particular a Kleene algebra, these
hold here as well.
\begin{align}
  \titerate{e}\tadd\titerate e&\equiv \titerate e\equiv \ptiterate{\titerate e}\law{star}
  \\
  \ptiterate{e\tadd \titerate f}&\equiv\un\tjoin \jparen{e\tadd\ptiterate{e\tjoin f}}\law{iter-sum}\\
\titerate e&\equiv e^{<n}\tjoin \jparen{\scale n e\tadd \titerate e}\law{n-or-more}
\end{align}
Here the notation $e^{<n}$ refers to the expression
$e^{<n}\eqdef\scale{\paren{n-1}}{\paren{e\tjoin \un}}$, with the
convention that $e^{<0}=\zero$. Clearly if $n>0$ we have
$\un\leqq e^{<n}\leqq \titerate e$. Note also that for any $k\in\Nat$
we have $e^{<k+1}\equiv \scale k e\tjoin e^{<k}$.

We show the proof of the following principle of Kleene algebra. This
was used as an infinitary axiom scheme in both Redko and Pilling's
proofs.
\begin{restatable}{lemma}{starcontinuity}\label{lem:star-cont}
  For any expression $e\in\reg$ and any positive number $n>0$ the following holds:
  \begin{equation}
    \titerate{e}\equiv e^{< n}\tadd\ptiterate{\scale n e}\law{first-iterates}.
  \end{equation}
\end{restatable}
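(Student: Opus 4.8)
The plan is to prove the two inequalities $e^{<n}\tadd\ptiterate{\scale n e}\leqq\titerate e$ and $\titerate e\leqq e^{<n}\tadd\ptiterate{\scale n e}$ separately and to conclude by antisymmetry of $\leqq$. Throughout I write $f\eqdef e^{<n}\tadd\ptiterate{\scale n e}$ for the right-hand side. Two facts get used repeatedly: the text already records $\un\leqq e^{<n}\leqq\titerate e$ (valid since $n>0$), and the star law \eqref{eq:star} gives $\titerate e\tadd\titerate e\equiv\titerate e$.

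For $f\leqq\titerate e$ I would bound each factor by $\titerate e$. The factor $e^{<n}\leqq\titerate e$ is already available. For the iterate, note $\scale n e\leqq e^{<n+1}\leqq\titerate e$, hence $\scale n e\tadd\titerate e\leqq\titerate e\tadd\titerate e\equiv\titerate e$; the induction rule \eqref{cka-lfp} (with iterated expression $\scale n e$ and bound $\titerate e$) then yields $\ptiterate{\scale n e}\tadd\titerate e\leqq\titerate e$, and absorbing the unit via $\un\leqq\titerate e$ gives $\ptiterate{\scale n e}\leqq\titerate e$. Multiplying the two bounds and collapsing with \eqref{eq:star} gives $f\leqq\titerate e\tadd\titerate e\equiv\titerate e$.

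The substantial direction is $\titerate e\leqq f$, which I would obtain from \eqref{cka-lfp} applied to $f$ itself: it suffices to establish $e\tadd f\leqq f$, for then \eqref{cka-lfp} gives $\titerate e\tadd f\leqq f$ and, since $\un\leqq e^{<n}\leqq f$, we conclude $\titerate e\equiv\titerate e\tadd\un\leqq\titerate e\tadd f\leqq f$. Verifying $e\tadd f\leqq f$ is the heart of the argument and the main obstacle. The key inequality is $e\tadd e^{<n}\leqq\scale n e\tjoin e^{<n}$, which I would get from $e\tadd e^{<n}\leqq e^{<n+1}$ (unfold $e^{<n+1}$ as $(e\tjoin\un)\tadd e^{<n}$ and distribute via \eqref{cka-left-distr}) together with the identity $e^{<n+1}\equiv\scale n e\tjoin e^{<n}$ noted in the text. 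Distributing this bound over $\ptiterate{\scale n e}$ yields
\[
e\tadd f\leqq\paren{\scale n e\tadd\ptiterate{\scale n e}}\tjoin\paren{e^{<n}\tadd\ptiterate{\scale n e}}.
\]
The second summand is exactly $f$; for the first, \eqref{cka-star} applied to $\scale n e$ gives $\scale n e\tadd\ptiterate{\scale n e}\leqq\ptiterate{\scale n e}$, and $\un\leqq e^{<n}$ gives $\ptiterate{\scale n e}\leqq e^{<n}\tadd\ptiterate{\scale n e}=f$. Both summands thus lie below $f$, so $e\tadd f\leqq f$, which completes the delicate step and hence the proof.
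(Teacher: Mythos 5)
Your proof is correct and follows essentially the same route as the paper: the easy inequality by bounding both factors by $\titerate e$, and the hard one by establishing $e\tadd f\leqq f$ for $f=e^{<n}\tadd\ptiterate{\scale n e}$ and invoking \eqref{cka-lfp}. The only cosmetic difference is that you bound $e\tadd e^{<n}$ upward via $e^{<n+1}\equiv\scale n e\tjoin e^{<n}$, whereas the paper unfolds $e^{<n}\equiv\scale{\paren{n-1}}e\tjoin e^{<n-1}$ downward; both rest on the same identity and yield the same two summands.
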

\begin{proof}
  We prove both inequalities, relying on antisymmetry of $\leqq$ to conclude.
  \begin{description}
  \item[($\geqq$):] clearly if $n>0$, we have: $\scale n e\leqq \scale n {\titerate e}\equiv \titerate e$. Also notice that $$e^{<n}=\scale{\paren{n-1}}{\paren{e\tjoin \un}}\leqq \scale{\paren{n-1}}{\titerate{e}}\leqq \titerate e$$
    Therefore we have the following:
    $e^{< n}\tadd\ptiterate{\scale n e}\leqq \titerate e\tadd \ptiterate{\titerate e}\equiv \titerate e$.
  \item[($\leqq$):]
    since $n>0$, we have $\un\leqq e^{<n}\equiv\scale{\paren{n-1}}e\tjoin e^{<n-1}$, hence:
    \begin{align*}
      e\tadd e^{<n}\tadd\ptiterate{\scale n e}
      &\equiv e\tadd\paren{\scale{\paren{n-1}}e\tjoin e^{<n-1}}\tadd \ptiterate{\scale n e}\\
      &\equiv \jparen{e\tadd\scale{\paren{n-1}}e\tadd \ptiterate{\scale n e}}\tjoin \jparen{e\tadd e^{<n-1}\tadd \ptiterate{\scale n e}}\\
      &\equiv \jparen{\scale n e\tadd \ptiterate{\scale n e}}\tjoin \jparen{e\tadd e^{<n-1}\tadd \ptiterate{\scale n e}}\\
      &\leqq \ptiterate{\scale n e}\tjoin \jparen{\paren{e\tjoin \un}\tadd e^{<n-1}\tadd \ptiterate{\scale n e}}\\
      &\equiv\jparen{\un\tadd\ptiterate{\scale n e}}\tjoin \jparen{e^{<n}\tadd \ptiterate{\scale n e}}\\
      &\leqq e^{<n}\tadd \ptiterate{\scale n e}.
    \end{align*}
    By~\eqref{cka-lfp}, this entails
    $\titerate e \tadd e^{<n}\tadd\ptiterate{\scale n e}\leqq
    e^{<n}\tadd\ptiterate{\scale n e}$, so we can now conclude since:
    $\titerate e \equiv\titerate e \tadd\un\leqq\titerate e \tadd
    e^{<n}\tadd\ptiterate{\scale n e}\leqq
    e^{<n}\tadd\ptiterate{\scale n e}$.\qed
  \end{description}
\end{proof}

We will also use the following law of commutative Kleene algebra.
\begin{align}
  \ptiterate{e\tjoin f}&\equiv\titerate e\tadd\titerate f\law{sum-to-prod}
\end{align}

\begin{restatable}{lemma}{basevector}\label{lem:base-vector}
  Given a finite set $\Base=\set{\vec u_1,\dots,\vec u_n}$ and a point
  $p\in\Points\setminus\unitvec$, the following holds:
  $$\titerate{\Base}\equiv \titerate{\tovec p}\tadd\tJoin_{i}\paren{\vec u_i^{<p_i}\tadd\tsum_{i\neq j}\titerate{\vec u_j}}.$$
\end{restatable}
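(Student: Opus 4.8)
The plan is to prove the two inclusions $Z\leqq\titerate\Base$ and $\titerate\Base\leqq Z$ separately, where $Z$ denotes the right-hand side, and to conclude by antisymmetry. Throughout I abbreviate $Q\eqdef\tovec p\equiv\tsum_k\scale{p_k}{\vec u_k}$ and, for each index $i$, $R_i\eqdef\vec u_i^{<p_i}\tadd\tsum_{j\neq i}\titerate{\vec u_j}$, so that $Z=\titerate Q\tadd\tJoin_i R_i$. Two observations set the stage: iterating~\eqref{eq:sum-to-prod} gives $\titerate\Base\equiv\tsum_k\titerate{\vec u_k}$, and since $p\neq\unitvec$ some coordinate satisfies $p_i>0$, so that $\un\leqq\vec u_i^{<p_i}\leqq\titerate{\vec u_i}$ and hence $\un\leqq Z$.

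The inclusion $Z\leqq\titerate\Base$ is the routine direction. From $\scale{p_k}{\vec u_k}\leqq\titerate{\vec u_k}$ I get $Q\leqq\titerate\Base$, and then monotonicity of the star (a standard Kleene-algebra fact) together with $\ptiterate{\titerate e}\equiv\titerate e$ from~\eqref{eq:star} yields $\titerate Q\leqq\titerate\Base$. Each summand is then bounded by $\titerate Q\tadd R_i\leqq\titerate\Base\tadd\titerate\Base\equiv\titerate\Base$, using $R_i\leqq\tsum_k\titerate{\vec u_k}\equiv\titerate\Base$ and the absorption $\titerate\Base\tadd\titerate\Base\equiv\titerate\Base$ of~\eqref{eq:star}.

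For the converse $\titerate\Base\leqq Z$ I would invoke the least-fixpoint rule~\eqref{cka-lfp}. As $\un\leqq Z$, it suffices to prove the closure $\Base\tadd Z\leqq Z$, for then~\eqref{cka-lfp} gives $\titerate\Base\tadd Z\leqq Z$ and hence $\titerate\Base\equiv\titerate\Base\tadd\un\leqq Z$. By distributivity~\eqref{cka-left-distr} the closure reduces to showing $\titerate Q\tadd\vec u_i\tadd R_k\leqq Z$ for every pair of indices $i,k$. When $k\neq i$ this is immediate: the factor $\titerate{\vec u_i}$ occurs inside $\tsum_{j\neq k}\titerate{\vec u_j}$, so $\vec u_i\tadd R_k\leqq R_k$ by~\eqref{cka-star}, whence $\titerate Q\tadd\vec u_i\tadd R_k\leqq\titerate Q\tadd R_k\leqq Z$. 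The degenerate indices with $p_i=0$ need no special care, since then $R_i\equiv\zero$ by~\eqref{cka-seq-absorbing} and $\vec u_i$ is always absorbed through some $k\neq i$.

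The one delicate case, and what I expect to be the main obstacle, is the overflow $k=i$. Adding $\vec u_i$ pushes the bounded factor up by one step: combining $e^{<m+1}\equiv e^{<m}\tjoin\scale m e$ with $e\tadd e^{<m}\leqq e^{<m+1}$ gives $\vec u_i\tadd\vec u_i^{<p_i}\leqq\vec u_i^{<p_i}\tjoin\scale{p_i}{\vec u_i}$, so $\titerate Q\tadd\vec u_i\tadd R_i$ splits as $(\titerate Q\tadd R_i)\tjoin T$ with $T\eqdef\titerate Q\tadd\scale{p_i}{\vec u_i}\tadd\tsum_{j\neq i}\titerate{\vec u_j}$. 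The first disjunct is below $Z$; to absorb $T$ I would split each factor $\titerate{\vec u_j}$ with $j\neq i$ by~\eqref{eq:n-or-more} into its part below $p_j$, namely $\vec u_j^{<p_j}$, and its part at least $p_j$, namely $\scale{p_j}{\vec u_j}\tadd\titerate{\vec u_j}$, and distribute. Any resulting term keeping at least one factor $\vec u_{j_0}^{<p_{j_0}}$ with $j_0\neq i$ lies below $\titerate Q\tadd R_{j_0}$, after absorbing $\scale{p_i}{\vec u_i}$ and every other factor into the stars $\tsum_{k\neq j_0}\titerate{\vec u_k}$ of $R_{j_0}$. The single remaining term, where all factors are taken at least $p_j$, equals $\titerate Q\tadd Q\tadd\tsum_{j\neq i}\titerate{\vec u_j}$ up to reordering, since $\scale{p_i}{\vec u_i}\tadd\tsum_{j\neq i}\scale{p_j}{\vec u_j}\equiv Q$; absorbing this extra copy of $Q$ through $Q\tadd\titerate Q\leqq\titerate Q$ (a consequence of~\eqref{cka-star}) leaves $\titerate Q\tadd\tsum_{j\neq i}\titerate{\vec u_j}\leqq\titerate Q\tadd R_i\leqq Z$, the first step using $\un\leqq\vec u_i^{<p_i}$. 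The essential content of the argument is concentrated here: crossing the threshold in coordinate $i$ must be paid for either by a coordinate that stays small elsewhere, or by consuming exactly one copy of the ray generator $Q$.
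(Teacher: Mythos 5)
Your proposal is correct and follows essentially the same route as the paper's own proof: the easy direction by monotonicity, the hard direction via \eqref{cka-lfp} reduced to $\Base\tadd Z\leqq Z$, absorption when the letter index differs from the term index, and in the overflow case the same threshold-splitting of the remaining coordinates by \eqref{eq:n-or-more}, with the all-large term reassembled into a copy of $\tovec p$ and absorbed into $\titerate{\tovec p}$. The only differences are cosmetic (you spell out the trivial inclusion and the $p_i=0$ degeneracy explicitly).
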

Due to the length of this proof, we are unable to reproduce it
here. The interested reader may find on extended versions of this
abstract. As a corollary, we get the following statement:
\begin{corollary}\label{cor:base-vector}
  Given a finite set of vectors $\Base$ and a vector
  $\vec w\in\iterate\Base$, there exists a semilinear expression $e$
  such that
  \begin{deflist}
  \item $\titerate{\Base}\equiv \titerate{\vec w}\tadd e$, and
  \item $\dim e<\Card\Base$.
  \end{deflist}
\end{corollary}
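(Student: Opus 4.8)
The plan is to instantiate Lemma~\ref{lem:base-vector} at a point representing $\vec w$ and then verify that the residual term it produces has the advertised dimension. Write $\Base=\set{\vec u_1,\dots,\vec u_n}$, so $\Card\Base=n$. Since $\vec w\in\iterate\Base=\setcompr{\tovec\alpha}{\alpha\in\Points}$, there is a point $p\in\Points$ with $\tovec p=\vec w$; assuming $\vec w\neq\unitvec$ (the only interesting case, as no nontrivial $\Nat$-combination of vectors can vanish) we may take $p\neq\unitvec$, so Lemma~\ref{lem:base-vector} applies and gives
\[
  \titerate\Base\equiv\titerate{\tovec p}\tadd e',\qquad\text{where}\quad e'\eqdef\tJoin_i\paren{\vec u_i^{<p_i}\tadd\tsum_{i\neq j}\titerate{\vec u_j}}.
\]
Because $\tovec p=\vec w$ and we identify each vector with its expression, $\titerate{\tovec p}=\titerate{\vec w}$; hence all that remains is to rewrite $e'$ into a genuine semilinear expression $e$ of dimension strictly below $n$, after which item~(i) follows from $\titerate\Base\equiv\titerate{\vec w}\tadd e'\equiv\titerate{\vec w}\tadd e$.

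To handle a single summand I would fix $i$ and set $\Base_i\eqdef\Base\setminus\set{\vec u_i}$, a set of $n-1$ vectors. Iterating law~\eqref{eq:sum-to-prod} rewrites the product $\tsum_{i\neq j}\titerate{\vec u_j}$ as the single star $\titerate{\Base_i}$, so the $i$-th summand is provably equal to $\vec u_i^{<p_i}\tadd\titerate{\Base_i}$. The factor $\vec u_i^{<p_i}$ is a finitary expression, so by Lemma~\ref{lem:split-finite} it is provably equal to a finite \union $\tJoin_{\vec v}\expr{\vec v}$ of its member vectors; distributing $\titerate{\Base_i}$ over this \union using~\eqref{cka-left-distr} then presents the summand as a finite \union of linear expressions $\expr{\vec v}\tadd\titerate{\Base_i}$, each with base $\Base_i$ and therefore of dimension $\Card{\Base_i}=n-1$.

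I would then take $e$ to be the semilinear expression obtained by performing these rewritings in every summand of $e'$. This yields $e\equiv e'$, which secures item~(i), while every linear component of $e$ has a base of the form $\Base_i$, so $\dim e\leqslant n-1<n=\Card\Base$, which is item~(ii). Once Lemma~\ref{lem:base-vector} is in hand the whole argument is essentially bookkeeping; the only step carrying any weight is the recognition of the product-of-stars $\tsum_{i\neq j}\titerate{\vec u_j}$ as a single star over the base $\Base_i$ containing one fewer vector, since this is exactly what produces the strict drop in dimension.
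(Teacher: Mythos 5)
Your proof is correct and matches the route the paper intends (the corollary is stated there without proof, as an immediate consequence of Lemma~\ref{lem:base-vector}): pick a point $p$ with $\tovec p=\vec w$, apply the lemma, and collapse each product $\tsum_{j\neq i}\titerate{\vec u_j}$ into $\titerate{\Base\setminus\set{\vec u_i}}$ via~\eqref{eq:sum-to-prod} to read off the dimension bound. Your side remark that one must assume $\vec w\neq\unitvec$ is well taken --- the statement as printed is actually false for $\vec w=\unitvec$ (take $\Base=\set{a}$) --- but that is an omission of the paper, not a gap in your argument.
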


\begin{remark}[On Pilling's axiomatisation]
  The axiomatisation proved complete by Pilling differs from ours in
  several ways. It does not include~\eqref{cka-lfp}
  or~\eqref{cka-star}, but is instead entirely composed of identity
  axioms. It is however infinite, including the identities from
  Lemma~\ref{lem:star-cont} for each value of $n$. Besides those, it
  includes our axioms \eqref{cka-seq-ass}-\eqref{cka-left-distr},
  together with \eqref{eq:sum-to-prod} and the following laws, all of
  which are derivable from \eqref{cka-seq-ass}-\eqref{cka-lfp}:
  \begin{mathpar}
    \titerate\un\equiv\un\and \ptiterate{e\tadd\titerate
      f}\equiv\un\tjoin e\tadd\titerate e\tadd\titerate f\and
    \ptiterate{e\tjoin f}\equiv\ptiterate{e\tadd
      f}\tadd\ptiterate{\titerate e\tjoin\titerate f}.
  \end{mathpar}
\end{remark}

\subsection{Rational vector spaces}
\label{sec:rat}
We will use in our development two facts about $\Rat$-vector spaces\footnotemark.
\begin{restatable}{remark}{linearindep}\label{lem:linear-indep}
  Any finite set of Parikh vectors $\Base\subseteq\Vectors$ is
  independent according to our definition if and only if it is
  linearly independent inside the space $\Vectors(\Rat)$.
\end{restatable}\footnotetext{We provide a detailed proof of
  Remark~\ref{lem:linear-indep} in Appendix~\ref{sec:proofs:3.3}.}

\begin{remark}\label{lem:extend-basis}
  Let $\Base\subseteq\Vectors$ be a finite independent set of vectors. There
  exists another finite set $\Base'\subseteq\Vectors$ such that
  \begin{deflist}
  \item $\Base$ and $\Base'$ are disjoint,
  \item $\Base\join\Base'$ is independent, and
  \item $\Base\join\Base'$ is a basis of $\Vectors(\Rat)$.
  \end{deflist}
\end{remark}
\begin{proof}
  This is an instance of the \emph{incomplete basis theorem}:
  
  Let $E$ be a vector space, $G$ a spanning family of $E$ and $L$ a linearly independent set. Then there exists $F \subset G\setminus L$ such that $L\cup F$ is a basis of $E$.

  In our case, we may choose the family $G$ to be the canonical basis
  on $\Vectors(\Rat)$, i.e.
  $\setcompr{\tovec a}{a\in\alphabet}\subseteq\Vectors$.\qed
\end{proof}
\section{Decomposition into linear expressions}
\label{sec:decomp}

In this section, we prove the first statement in the proof,
i.e. Proposition~\ref{lemma:decomposition}, that states that every
expression is provably equal to as a finite \union of unambiguous
expressions.

First, we split expressions into finite \unions of linear expressions,
i.e. semilinear expressions. This first step already entails that
every commutative regular language has star-height at most one.

\begin{lemma}\label{lem:decomp}
  Any expression $e$ is provably equal to some semilinear expression.
\end{lemma}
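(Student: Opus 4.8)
The plan is to prove Lemma~\ref{lem:decomp} by structural induction on the expression $e$, showing that every expression is provably equal to a semilinear one, i.e.\ a finite join of linear expressions $\vec u\tadd\titerate\Base$. The base cases are immediate: $\zero$ is the empty join (semilinear of dimension $0$), while $\un$, and each letter $a$ are of the form $\vec u\tadd\titerate\emptyset$ (using that $\titerate\emptyset\equiv\un$, derivable from the axioms). For the inductive step I would assume $e$ and $f$ are each provably equal to semilinear expressions, say $e\equiv\tJoin_{i}\vec u_i\tadd\titerate{\Aase_i}$ and $f\equiv\tJoin_{j}\vec v_j\tadd\titerate{\Base_j}$, and handle each of the three operators $\tjoin$, $\tadd$, $\titerate\_$ in turn.

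The join case is trivial: $e\tjoin f$ is already a finite join of linear expressions, so it is semilinear by construction. The product case requires slightly more work. Using left-distributivity~\eqref{cka-left-distr} (together with commutativity of $\tadd$ to get distributivity on both sides), the product $e\tadd f$ distributes over the joins, reducing the task to showing that a single product of two linear expressions is semilinear. A product $(\vec u\tadd\titerate\Aase)\tadd(\vec v\tadd\titerate\Base)$ can be rearranged, using associativity and commutativity of $\tadd$, into $(\vec u\vecplus\vec v)\tadd(\titerate\Aase\tadd\titerate\Base)$, where $\vec u\vecplus\vec v$ is handled by Lemma~\ref{lem:add-vect}. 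The key subfact here is that $\titerate\Aase\tadd\titerate\Base\equiv\titerate{(\Aase\join\Base)}$, which follows from~\eqref{eq:sum-to-prod} applied to the join $\Aase\tjoin\Base$ (recall the convention identifying a finite set of vectors with the join of the corresponding expressions). This yields a single linear expression $(\vec u\vecplus\vec v)\tadd\titerate{(\Aase\join\Base)}$, so the product of semilinear expressions is semilinear.

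The star case is the main obstacle, and I expect it to be where the real content lies. Applying $\titerate\_$ to a join $\tJoin_i g_i$ of linear expressions, I would first use the law~\eqref{eq:sum-to-prod}, generalised to a finite join, to turn $\titerate{(\tJoin_i g_i)}$ into the product $\tsum_i\titerate{g_i}$. This reduces the problem to computing $\titerate{g}$ for a single linear expression $g=\vec u\tadd\titerate\Base$ and then multiplying finitely many such stars together (the product case above already shows products of semilinear expressions stay semilinear). So the crux is to show $\ptiterate{\vec u\tadd\titerate\Base}$ is semilinear. This is precisely the content of the law~\eqref{eq:iter-sum}, which gives $\ptiterate{\vec u\tadd\titerate\Base}\equiv\un\tjoin\jparen{\vec u\tadd\ptiterate{\vec u\tjoin\Base}}$; expanding the inner star $\ptiterate{\vec u\tjoin\Base}$ again via~\eqref{eq:sum-to-prod} into $\titerate{\vec u}\tadd\titerate\Base$, and noting $\titerate{\vec u}\equiv\un\tjoin\vec u\tadd\titerate{\vec u}$, one obtains a join of linear expressions whose base is $\set{\vec u}\join\Base$. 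Assembling these pieces — distribute the outer product from~\eqref{eq:sum-to-prod} over the resulting joins and collapse each factor using the product case — gives a semilinear normal form, completing the induction.
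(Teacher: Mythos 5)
Your proof is correct and follows essentially the same route as the paper: structural induction, with the product case handled by distributivity plus~\eqref{eq:sum-to-prod}, and the star case handled by~\eqref{eq:sum-to-prod} followed by~\eqref{eq:iter-sum} (your extra expansion of $\ptiterate{\vec u\tjoin\Base}$ is harmless but unnecessary, since $\vec u\tadd\ptiterate{\vec u\tjoin\Base}$ is already linear with base $\set{\vec u}\join\Base$).
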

\begin{proof}
  We will show this by induction on expressions.
  \begin{description}
  \item[$\blacktriangleright \zero,\un, a ,e\tjoin f$:] these all hold
    trivially with $\zero$ as an empty \union,
    $\un\equiv\un\tadd\titerate\zero$ and
    $a\equiv a\tadd\titerate\zero$. For $e\tjoin f$, since a
    semilinear expression is defined as a \union, we can simply take
    the \union of the semilinear expressions computed inductively for
    $e$ and $f$.
  \item[$\blacktriangleright e\tadd f$:] we simply make the following computation:
    \begin{align*}
      e\tadd f
      &\equiv\tJoin_{1\leqslant i\leqslant n}\jparen{\vec u_i\tadd\titerate{\Base_i}}
        \tadd
        \tJoin_{1\leqslant i\leqslant m}\jparen{\vec u'_i\tadd\titerate{\Base'_i}}\\
      &\equiv\tJoin_{1\leqslant i\leqslant n,1\leqslant j\leqslant m}\jparen{\vec u_i\tadd\titerate{\Base_i}\tadd\vec u'_i\tadd\titerate{\Base'_i}}\\
      &\equiv\tJoin_{1\leqslant i\leqslant n,1\leqslant j\leqslant m}\jparen{\vec u_i\tadd \vec v_j\tadd\titerate{\paren{\Base_i\join\Base'_j}}}.\tag{by~\ref{eq:sum-to-prod}}
    \end{align*}
  \item[$\blacktriangleright \titerate e$:] we perform a similar
    computation, relying on~\eqref{eq:sum-to-prod}
    and~\eqref{eq:iter-sum}:
    \begin{align*}
      \titerate e\equiv\ptiterate{\tJoin_{1\leqslant i\leqslant n}\jparen{\vec u_i\tadd\titerate{\Base_i}}}
      &\equiv\tsum_{1\leqslant i\leqslant n}\ptiterate{\vec u_i\tadd\titerate{\Base_i}}
        \tag{by~\ref{eq:sum-to-prod}}\\
      &\equiv\tsum_{1\leqslant i\leqslant n}\paren{\un\tjoin\jparen{\vec u_i\tadd\ptiterate{\vec u_i\tjoin\Base_i}}}\tag{by~\ref{eq:iter-sum}}\\
      &\equiv\tJoin_{I\subseteq\set{1,\dots,n}}\paren{\tsum_{i\in I}\vec u_i}\tadd\ptiterate{\tJoin_{i\in I}\paren{\vec u_i\tjoin\Base_i}}\tag*{\qed}
    \end{align*}
  \end{description}
\end{proof}

The second step consists in splitting individual linear expressions
into \unions of unambiguous ones. We will do the decomposition by
induction on the dimension of the expressions. In an attempt to
clarify the argument, we prove separately the following lemma.

\begin{lemma}\label{lem:dep-inf-dim}
  Let $\Base$ be a finite set of vectors. If $\Base$ is not independent, there exists a semilinear expression $e$ such that
  \begin{deflist}
  \item $\titerate{\Base}\equiv e$, and
  \item $\dim e<\Card\Base$.
  \end{deflist}
\end{lemma}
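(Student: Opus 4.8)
The plan is to reduce $\titerate{\Base}$ with Lemma~\ref{lem:base-vector}, exploiting a \emph{pair} of distinct decompositions that witness the dependency. Write $\Base=\set{\vec u_1,\dots,\vec u_n}$, so $\Card\Base=n$. Since $\Base$ is not independent, $\tovec\_$ is not injective, hence there are distinct points $p\neq q\in\Points$ with $\tovec p=\tovec q$. First I would normalise this collision: letting $m$ be the coordinatewise minimum of $p$ and $q$, and using that $\tovec\_$ is additive and $\vecplus$ cancellative in $\Vectors$, the points $p'\eqdef p-m$ and $q'\eqdef q-m$ have \emph{disjoint supports}, still satisfy $p'\neq q'$, and give $\tovec{p'}=\tovec{q'}=:\vec w$. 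As $p'\neq q'$ they cannot both be zero; relabelling if necessary, assume $p'\in\Points\setminus\unitvec$.

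Next I would instantiate Lemma~\ref{lem:base-vector} at the point $p'$, obtaining
\begin{equation*}
  \titerate{\Base}\equiv\titerate{\vec w}\tadd\tJoin_i\paren{\vec u_i^{<p'_i}\tadd\tsum_{j\neq i}\titerate{\vec u_j}}.
\end{equation*}
Writing $S\eqdef\setcompr{i}{p'_i>0}$, every index $i\notin S$ contributes a null summand, since $\vec u_i^{<0}=\zero$ and $\zero\tadd e\equiv\zero$ by~\eqref{cka-seq-absorbing}; so the \union ranges effectively over $i\in S$. Distributing $\titerate{\vec w}$ over this \union leaves me, for each $i\in S$, with the factor $\titerate{\vec w}\tadd\tsum_{j\neq i}\titerate{\vec u_j}$.

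The crux is to absorb the spurious factor $\titerate{\vec w}$, and this is exactly where the \emph{second} decomposition is needed. Writing $G_i\eqdef\set{\vec u_j\mid j\neq i}$, one has $\tsum_{j\neq i}\titerate{\vec u_j}\equiv\titerate{G_i}$ by~\eqref{eq:sum-to-prod}; and since $\vec w=\tovec{q'}$ is supported on $T\eqdef\setcompr{j}{q'_j>0}$, which is disjoint from $S$ and therefore avoids $i$, we get $\vec w\in\iterate{G_i}$ (trivially so when $q'=0$ and $\vec w=\unitvec$). Lemma~\ref{lem:ax-mem} then gives $\expr{\vec w}\leqq\titerate{G_i}$, whence $\titerate{\vec w}\leqq\titerate{G_i}$ by monotonicity of $\titerate\_$ together with $\ptiterate{\titerate{G_i}}\equiv\titerate{G_i}$ from~\eqref{eq:star}; combining this with $\titerate{G_i}\tadd\titerate{G_i}\equiv\titerate{G_i}$ (again~\eqref{eq:star}) and $\un\leqq\titerate{\vec w}$ yields $\titerate{\vec w}\tadd\titerate{G_i}\equiv\titerate{G_i}$. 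Consequently
\begin{equation*}
  \titerate{\Base}\equiv\tJoin_{i\in S}\paren{\vec u_i^{<p'_i}\tadd\tsum_{j\neq i}\titerate{\vec u_j}}\eqdef e,
\end{equation*}
and expanding each finitary prefix $\vec u_i^{<p'_i}$ into a \union of vectors exhibits $e$ as a semilinear expression whose every linear component has base $G_i$ of cardinality $n-1$; thus $\dim e\leqslant n-1<n=\Card\Base$, which is what we want.

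The main obstacle is conceptual rather than computational: Lemma~\ref{lem:base-vector} on its own only peels off one generator $\tovec{p'}$ while reintroducing it as a star factor $\titerate{\vec w}$, so it does not by itself lower the dimension. The key idea is that the dependency supplies \emph{two} witnesses of disjoint support playing complementary roles — $p'$ drives the expansion, while $q'$ certifies that $\vec w$ already lies in each $\iterate{G_i}$, which is precisely what makes the leftover $\titerate{\vec w}$ redundant and lets the dimension drop. Getting the support-disjointness normalisation right, so that $i\in S$ forces $i\notin T$, is the one step that needs genuine care.
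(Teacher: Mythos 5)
Your proposal is correct, and it reaches the conclusion by a genuinely different mechanism than the paper, even though both arguments begin with the same normalisation (two $\Base$-points with equal image and disjoint supports, yielding $\vec w=\tovec{p'}=\tovec{q'}$) and both ultimately rest on Lemma~\ref{lem:base-vector}. The paper partitions $\Base$ into $\Base_\mu$, $\Base_\nu$ and $\Base_0$ according to the supports of the two witnesses (its $\mu,\nu$ are your $p',q'$), applies Corollary~\ref{cor:base-vector} separately to the two halves to obtain $\titerate{\Base_\mu}\equiv\titerate{\vec w}\tadd e_\mu$ and $\titerate{\Base_\nu}\equiv\titerate{\vec w}\tadd e_\nu$, and gains the missing dimension by merging the two copies of $\titerate{\vec w}$ into one before adding up the dimensions of the pieces. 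You instead apply Lemma~\ref{lem:base-vector} once, to all of $\Base$ at the point $p'$, and then eliminate the factor $\titerate{\vec w}$ outright: for each surviving index $i$ (necessarily in the support of $p'$, hence outside that of $q'$), the second witness shows $\vec w\in\iterate{G_i}$, whence $\titerate{\vec w}\tadd\titerate{G_i}\equiv\titerate{G_i}$. What your route buys is a single application of the lemma, no three-way partition, and no additive dimension bookkeeping --- every component visibly has base $G_i$ of cardinality exactly $\Card\Base-1$. What it costs is that you need the explicit shape of the right-hand side of Lemma~\ref{lem:base-vector}, both to see that each residual star factor is $\titerate{G_i}$ and to discard the $i\notin S$ summands via $\vec u_i^{<0}=\zero$ and~\eqref{cka-seq-absorbing}, whereas the paper only invokes the packaged Corollary~\ref{cor:base-vector}. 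The steps you leave implicit --- monotonicity of $\titerate{\_}$ to pass from $\expr{\vec w}\leqq\titerate{G_i}$ to $\titerate{\vec w}\leqq\titerate{G_i}$, and the expansion of each $\vec u_i^{<p'_i}$ into a finite \union of vectors so that $e$ is literally semilinear --- are standard and at the level of rigour the paper itself employs.
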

\begin{proof}
  If $\Base$ is not independent, there is a pair of $\Base$-points
  $\alpha,\beta\in\Points$ such that $\alpha\neq\beta$ and
  $\tovec \alpha=\tovec \beta$. We define the $\Base$-points $\gamma$,
  $\mu$, and $\nu$ by
  \begin{mathpar}
    \gamma_i\eqdef\min\paren{\alpha_i,\beta_i},\and
    \mu_i\eqdef\alpha_i-\gamma_i,\and
    \nu_i\eqdef\beta_i-\gamma_i.
  \end{mathpar}
  Observe that by construction $\mu$ and $\nu$ denote the same
  $\Base$-point, and that for every coordinate either $\mu_i$ or
  $\nu_i$ equals $0$. We split $\Base$ according to $\mu$ and $\nu$:
  \begin{mathpar}
    \Base_\mu\eqdef\setcompr{\vec u\in\Base}{\mu_{\vec u}>0}\and
    \Base_\nu\eqdef\setcompr{\vec u\in\Base}{\nu_{\vec u}>0}\and
    \Base_0\eqdef\setcompr{\vec u\in\Base}{\mu_{\vec u}=\nu_{\vec u}=0}.
  \end{mathpar}
  Notice that since $\Base = \Base_\mu\uplus\Base_\nu\uplus\Base_0$,
  and thanks to~\eqref{eq:sum-to-prod}, we have:
  $$\titerate{ \Base}\equiv\ptiterate{\Base_\mu\tjoin\Base_\nu\tjoin\Base_0}\equiv\titerate{\Base_\mu}\tadd\titerate{\Base_\nu}\tadd\titerate{\Base_0}$$
  Let $\vec w\eqdef\tovec\mu=\tovec \nu$. Since this word is both in
  $\iterate{\Base_\mu}$ and in $\iterate{\Base_\nu}$, by
  Corollary~\ref{cor:base-vector} there are semilinear expressions
  $e_\mu$ and $e_\nu$ such that:
  \begin{mathpar}
    \titerate{\Base_\mu}\equiv \titerate{\vec w}\tadd e_\mu\and
    \titerate{\Base_\nu}\equiv \titerate{\vec w}\tadd e_\nu\and
    \dim{e_\mu}<\Card{\Base_\mu}\and
    \dim{e_\nu}<\Card{\Base_\nu}.
  \end{mathpar}
  Combining these facts together we obtain:
  $$\titerate{ \Base}\equiv \titerate{\vec w}\tadd e_\mu\tadd \titerate{\vec w}\tadd e_\nu\tadd\titerate{\Base_0}\equiv \titerate{\vec w}\tadd e_\mu\tadd e_\nu\tadd\titerate{\Base_0}.$$
  By distributivity, this expression may be rewritten as a semilinear
  expression $e$, the dimension of which is the sum of the dimensions
  of its terms, i.e.:
  \begin{align*}
    \dim e&=\dim{\titerate{\vec w}}+\dim{e_\mu}+\dim{e_\nu}+\dim{\titerate{\Base_0}}\\
          &=1+\dim{e_\mu}+\dim{e_\nu}+\Card{\Base_0}\\
          &\leqslant 1+\paren{\Card{\Base_\mu}-1}+\paren{\Card{\Base_\nu}-1}+\Card{\Base_0}\\
          &= \paren{\Card{\Base_\mu}+\Card{\Base_\nu}+\Card{\Base_0}}-1=\Card\Base-1<\Card\Base.\tag*{\qed}
  \end{align*}
\end{proof}

We may now use this lemma in an induction on the dimension of
semilinear expressions to obtain the desired result.

\decomposition
\begin{proof}
  Let $e$ be a commutative regular expression. By
  Lemma~\ref{lem:decomp}, we can compute a semilinear expression $f$
  such that $e\equiv f$. We prove by induction on $\dim f$ that $f$
  can be written as a \union of unambiguous linear expressions. If $f$
  is already a \union of unambiguous linear expressions, then the
  statement holds. Otherwise, let $\vec u\cdot \titerate{ \Base}$ be a
  term in $f$ such that $\Base$ is not independent. Notice that by
  definition of the dimension of an expression, we have
  $\Card \Base\leqslant\dim f$. Thanks to Lemma~\ref{lem:dep-inf-dim},
  we can obtain a semilinear expression~$f'$ such that:
  $\titerate{ \Base}\equiv f'$ and $\dim{f'}<\Card\Base$. By
  induction, $f'$ can be rewritten as a \union of unambiguous linear
  expressions, and by distributivity so can $\vec u\tadd f'$. We
  repeat this argument for every term in $f$, and take the \union of
  the resulting decompositions to obtain a \union of unambiguous linear
  expressions that is provably equal to $f$, and so to $e$.\qed
\end{proof}

\section{Inclusion of linear terms}
\label{sec:incl-bases}
This step is the easiest in this development. We prove it directly.
\baseInclusion
\begin{proof}
  Let $e=\vec u\tadd\titerate{\Aase}$ and
  $f=\vec v\tadd\titerate{\Base}$. Recall that the pointwise
  ordering of vectors of natural numbers is a well-quasi ordering,
  meaning in particular that every infinite set contains at least one
  ordered pair.

  Let $\vec a\in\Aase$. We now show that
  $\exists k_{\vec a}\geqslant 1:\,\scale k{\vec
    a}\in\iterate\Base$. Consider the set
  $V_{\vec a}\eqdef\setcompr{p\in\Points}{\vec v\vecplus\tovec p\in\vec
    u\vecplus\iterate{\vec a}}$. Since
  $\vec u\vecplus\iterate{\vec a}\subseteq\sem e\subseteq\sem f$, we know
  that this set is infinite. Therefore, there are two points
  $p,q\in V_a$ that are pointwise ordered, which implies that there
  exists a third point $r\in\Points\setminus\unitvec$ such that
  $p\vecplus r=q$. Since $p,q\in V_a$, there are numbers $n,m\in\Nat$ such
  that $\vec v\vecplus\tovec p=\vec u\vecplus\scale n{\vec a}$ and
  $\vec v\vecplus\tovec q=\vec u\vecplus\scale m{\vec a}$. Since $p$ is
  pointwise smaller than $q$, it follows that $n<m$, i.e.
  $1\leqslant m-n$. Coincidentally, since $p\vecplus r=q$, we get that
  $\tovec r=\tovec q-\tovec p$, i.e.
  $\tovec r=\scale {\paren{m-n}}{\vec a}$. Therefore,
  $k_{\vec a}\eqdef m-n$ satisfies the required properties, namely
  $k_{\vec a}\geqslant 1$ and
  $\scale {k_{\vec a}}{\vec a}=\tovec r\in\iterate\Base$.

  This is enough to complete the proof:
  \begin{align*}
    e=\vec u\tadd\titerate{\Aase}
      &\equiv \vec u\tadd\tsum_{\vec a\in\Aase}\paren{\vec a^{<k_{\vec a}}\tadd\ptiterate{\scale {k_{\vec a}}{\vec a}}}
        \tag{by~\ref{eq:sum-to-prod},\ref{eq:first-iterates}}\\
      &\leqq \vec u\tadd\tsum_{\vec a\in\Aase}\paren{\vec a^{<k_{\vec a}}\tadd\ptiterate{\titerate{ \Base}}}
        \tag{by Lemma~\ref{lem:ax-mem}}\\
      &\equiv \paren{\vec u\tadd\tsum_{\vec a\in\Aase}\vec a^{<k_{\vec a}}}\tadd\titerate{ \Base}\\
      &\leqq \vec v\tadd\titerate{ \Base}\tadd\titerate{ \Base}
        \tag{by Corollary~\ref{cor:fin-incl}}
        \equiv \vec v\tadd\titerate{ \Base}=f.
  \end{align*}\qed
\end{proof}

\corBaseInclusion
\begin{proof}
  By Proposition~\ref{lemma:decomposition} we can write $e\equiv E$, with
  $E$ a finite set of (unambiguous) linear expressions.  To obtain
  $e\leqq f$, we only need to show for each $g\in E$ that $g\leqq f$.
  Since $g\leqq E\equiv e$, by soundness we have
  $\sem g\subseteq \sem e$.  Therefore we have $g,f$ linear and
  $\sem g\subseteq\sem e\subseteq \sem f$: by
  Proposition~\ref{lemma:baseInclusion} we get $g\leqq f$.\qed
\end{proof}
\section{Intersection and difference}
\label{sec:inter-diff}

We fix for the remainder of this section an unambiguous linear term
$f\eqdef \vec v\tadd\titerate{\Base}$. A \emph{decomposition} of an
expression $e$ is a pair of terms $\tuple{x,y}$ such that
\begin{deflist}
\item $e\equiv x\tjoin y$, 
\item $\sem x\subseteq \sem f$, and
\item $\sem y\subseteq \sem e\setminus \sem f$.
\end{deflist}

\begin{remark}\label{rmk:dec-union}
  It is useful to keep in mind that the operations $\_\cap X$ and
  $\_\setminus X$ commute with unions.
  Because of this, showing that every expression can be decomposed is
  equivalent to proving that every \emph{linear} expression is
  decomposable.
  Indeed, using Lemma~\ref{lem:base-vector}, we may write any $e$ as a
  finite \union of linear expressions $e_1,\dots,e_n$.
  If we have decompositions $\tuple{x_i,y_i}$ of each of those terms,
  then the pair $\tuple{\tJoin_i x_i,\tJoin_i y_i}$ is a decomposition
  of $e$.
\end{remark}

Now we show that linear expressions that are in some sense
``compatible'' with $f$ can be decomposed.
\begin{lemma}\label{lem:dec-compatible}
  A linear expression $e=\vec u\tadd\titerate{\Aase}$ such that
  $\iterate{\Aase}\subseteq \iterate\Base$ can be decomposed.
\end{lemma}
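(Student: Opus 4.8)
The plan is to turn membership in $\sem f$ into a finite system of coordinate inequalities, and then to peel the generators of $\Aase$ off one at a time. First I would fix a coordinate system: by Remark~\ref{lem:extend-basis} extend the independent family $\Base$ to a basis $\BBase$ of $\Vectors(\Rat)$, so that every vector has unique rational $\BBase$-coordinates and $\vec w\in\iterate\Base$ holds exactly when the $\Base$-coordinates of $\vec w$ are natural numbers while its $\paren{\BBase\setminus\Base}$-coordinates are $0$ (here I use Remark~\ref{lem:linear-indep} to know $\Base$ is $\Rat$-linearly independent). Writing $\vec d\eqdef\vec u-\vec v$ and noting that every $\vec w\in\iterate\Aase\subseteq\iterate\Base$ has vanishing $\paren{\BBase\setminus\Base}$-coordinates, a short computation shows that $\vec u\vecplus\vec w\in\sem f$ if and only if $\vec d$ has zero $\paren{\BBase\setminus\Base}$-coordinates and integral $\Base$-coordinates, and moreover $w_b\geqslant -d_b$ for every $b\in\Base$. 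Thus membership reduces to the finite conjunction of the threshold conditions $w_b\geqslant -d_b$.

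This criterion dictates an initial case split that disposes of the ``obvious'' situations without recursion. If $\vec d$ has a nonzero $\paren{\BBase\setminus\Base}$-coordinate or a non-integral $\Base$-coordinate, then $\sem e\cap\sem f=\emptyset$, and I take the decomposition $\tuple{\zero,e}$. Otherwise let $C\eqdef\setcompr{b\in\Base}{d_b<0}$ collect the \emph{critical} coordinates. If $C=\emptyset$ then $w_b\geqslant 0\geqslant -d_b$ always holds, so $\sem e\subseteq\sem f$ and $\tuple{e,\zero}$ works; and if some critical $b$ satisfies $a_b=0$ for all $\vec a\in\Aase$, then $w_b=0<-d_b$ for every $\vec w$, so again $\sem e\cap\sem f=\emptyset$ and $\tuple{\zero,e}$ works.

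The remaining case---where every critical coordinate is touched by at least one generator---I would treat by induction on the pair $\tuple{\Card\Aase,\Card C}$ ordered lexicographically. Choose $b_0\in C$ and a generator $\vec a\in\Aase$ with $a_{b_0}>0$, put $\Aase'\eqdef\Aase\setminus\set{\vec a}$, and set $N\eqdef -d_{b_0}>0$. Unrolling $\titerate{\vec a}$ with~\eqref{eq:n-or-more}, expanding $\vec a^{<N}$, and distributing (using~\eqref{cka-left-distr}, \eqref{eq:sum-to-prod} and Lemma~\ref{lem:add-vect}) yields
$$e\equiv\tJoin_{0\leqslant m<N}\paren{\vec u\tadd\scale m{\vec a}}\tadd\titerate{\Aase'}\;\tjoin\;\paren{\vec u\tadd\scale N{\vec a}}\tadd\titerate{\Aase}.$$
Each of the first $N$ summands is linear over the strictly smaller family $\Aase'\subseteq\iterate\Base$, so the first component of the measure drops and the induction hypothesis applies. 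For the last summand the offset becomes $\vec u\vecplus\scale N{\vec a}$, so its relevant differences become $d_b+Na_b$; these satisfy $d_{b_0}+Na_{b_0}\geqslant0$ and $d_b+Na_b\geqslant d_b$ for every $b$, so its critical set is contained in $C\setminus\set{b_0}$, the second component of the measure drops while the first is unchanged, and again the hypothesis applies. It remains to glue: since the semantics of every summand is contained in $\sem e$, joining the ``intersection'' parts and joining the ``difference'' parts of the inductively obtained decompositions produces a decomposition of $e$, exactly as in Remark~\ref{rmk:dec-union}.

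The main obstacle is choosing the right induction measure. A plain induction on $\Card\Aase$ fails, because the ``many copies of $\vec a$'' summand keeps all of $\Aase$; the key realisation is that taking $N$ large enough to saturate one critical coordinate $b_0$ makes that summand strictly simpler in a \emph{secondary} measure, the number $\Card C$ of critical coordinates, even though it retains the same generators. Verifying that saturating $b_0$ removes it from $C$ while creating no new critical coordinates is the delicate point; by contrast, the coordinate bookkeeping behind the membership criterion, and the final gluing step, are routine once $\BBase$ has been fixed.
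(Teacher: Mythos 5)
Your proof is correct, but it takes a genuinely different route from the paper's. The paper also works by unrolling generators with~\eqref{eq:n-or-more} and recursing on proper subsets of $\Aase$, but its induction is on $\Card\Aase$ alone: after disposing of the case $\sem e\cap\sem f=\emptyset$, it picks a semantic witness $\vec w=\vec u\vecplus\tovec[\Aase]{\alpha}\in\sem e\cap\sem f$ and unrolls \emph{every} generator $\vec a$ up to its multiplicity $\alpha(\vec a)$ in that witness simultaneously. The unique summand that retains all of $\Aase$ is then $\vec w\tadd\titerate{\Aase}$, whose semantics lies in $\sem f$ outright because $\vec w\in\sem f$ and $\sem f\vecplus\iterate\Base=\sem f$ --- so that summand needs no recursive call, and every other summand drops at least one generator. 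You instead replace the witness by an explicit coordinate criterion ($w_b\geqslant -d_b$ on the critical coordinates), unroll one generator at a time, and pay for the summand that keeps all of $\Aase$ with a secondary induction on $\Card C$; the lexicographic measure is well-founded, and the bookkeeping checks out (note that it silently uses that every $\vec a\in\Aase\subseteq\iterate\Base$ has \emph{natural-number} $\Base$-coordinates, so that $N=-d_{b_0}$ is a positive integer and $d_{b_0}+Na_{b_0}=N(a_{b_0}-1)\geqslant 0$, while $d_b+Na_b\geqslant d_b$ elsewhere creates no new critical coordinates). What your version buys is effectiveness --- the emptiness and containment tests become finite coordinate checks rather than an appeal to an unexhibited witness --- at the cost of pulling forward the basis extension $\BBase$ and the $\topoints\_$ machinery of Remarks~\ref{lem:linear-indep} and~\ref{lem:extend-basis}, which the paper only introduces \emph{after} this lemma in preparation for Lemma~\ref{lem:homogeneous} and Proposition~\ref{lemma:baseSplit}; the paper's argument for this particular lemma stays coordinate-free and is somewhat shorter.
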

\begin{proof}
  By induction on $\Card \Aase$. If $\Card \Aase=0$, the statement holds
  trivially, with
  $\tuple{x,y}\in\set{\tuple{e,\zero},\tuple{\zero,e}}$ depending on
  whether $\vec u\in\sem f$. Otherwise, if
  $\sem e\cap \sem f=\emptyset$, then again, the statement holds
  trivially, with $x=\zero$ and $y=e$.

  Therefore we only need to consider the case where $\Card\Aase>0$ and
  we have a vector $\vec w\in\sem e\cap\sem f$.  Since
  $\vec w\in\sem e$, there is a point $\alpha\in\Points[\Aase]$ such
  that $\vec w=\vec u\tadd\tovec[\Aase]\alpha$. We make the following
  transformation on $e$, using \eqref{eq:sum-to-prod} and
  \eqref{eq:n-or-more}:
  $$e=\vec u\tadd\titerate{\Aase}\equiv \vec u\tadd\tsum_{\vec
    a\in\Aase}\titerate{\vec a}\equiv \vec u\tadd\tsum_{\vec
    a\in\Aase}\paren{\vec a^{<\alpha(\vec a)}\tjoin\paren{\scale{\alpha(\vec a)}{\vec
        a}\tadd\titerate{\vec a}}}. $$
  For $A\subseteq\Aase$, we define:
  $U_A\eqdef
  \vec u
  \tadd\tsum_{\vec a \notin A}\vec a^{<\alpha(\vec a)}
  \tadd\tsum_{\vec a\in A}\scale{\alpha(\vec a)}{\vec a}$.
  Notice that $\sem {U_A}$ is finite, and that for $A=\Aase$, we get
  $U_\Aase=\vec u\tadd\tsum_{\vec a\in \Aase}\scale{\alpha(\vec a)}{\vec a}=\vec w$.
  By distributivity, we get from the previous identity:
  $$
  e\equiv \tJoin_{A\subseteq \Aase}\tJoin_{\vec t\in U_A}\vec t\tadd\titerate{ A}
  \equiv \paren{\vec w\tadd\titerate{ \Aase}}\tjoin
  \tJoin_{A\subsetneq \Aase}\tJoin_{\vec t\in U_A}\vec t\tadd\titerate{ A}.
  $$
  For each $A\subsetneq \Aase$, we have
  \begin{deflist}
  \item $\Card A<\Card \Aase$
  \item $\iterate A\subseteq\iterate \Aase\subseteq\iterate\Base$.
  \end{deflist}
  Therefore we may use our induction hypothesis to get for each
  $A\subsetneq\Aase$ and $\vec t\in U_A$ a pair of terms
  $x_{\vec t,A}$ and $y_{\vec t,A}$ such that:
  \begin{deflist}
  \item $\vec t\tadd\titerate{ A}\equiv {x_{\vec t,A}}\tjoin {y_{\vec t,A}}$, 
  \item $\sem {x_{\vec t,A}}\subseteq \sem f$, and
  \item $\sem {y_{\vec t,A}}\subseteq \sem {\vec t\tadd\titerate{ A}}\setminus \sem f$.
  \end{deflist}
  Finally, we conclude by setting
  \begin{mathpar}
    x\eqdef \paren{\vec w\tadd\titerate{ \Aase}}\tjoin
    \tJoin_{A\subsetneq \Aase}\tJoin_{\vec t\in U_A}x_{\vec t,A}\and
    y\eqdef \tJoin_{A\subsetneq \Aase}\tJoin_{\vec t\in U_A}y_{\vec t,A}.
  \end{mathpar}
  Clearly, $e\equiv x\tjoin y$. Since $\vec w\in\sem f$,
  $\iterate\Aase\subseteq\iterate \Base$ and
  $f\equiv f\tadd\titerate{\Base}$, we know that
  $\sem{\vec w\tadd\titerate{ \Aase}}\subseteq\sem
  f\tadd\iterate\Base=\sem f$. Therefore we get
  $\sem x\subseteq\sem f$. Finally, we know that
  \begin{equation*}
    \sem y=\Join_{A\subseteq\Aase}\Join_{\vec t\in U_A}\sem{y_{\vec t,A}}
    \subseteq \Join_{A\subseteq\Aase}\Join_{\vec t\in U_A}\sem {\vec t\tadd\titerate{ A}}\setminus \sem f
    \subseteq \sem e \setminus\sem f.\tag*{\qed}
  \end{equation*}
\end{proof}

Using Remark~\ref{lem:extend-basis}, we extend $\Base$ with
${\bar\Base}\subseteq\Vectors$ such that
$\BBase\eqdef\Base\uplus{\bar\Base}$ is a basis of $\Vectors(\Rat)$.
As such, $\tovec[\BBase]\_$ may be seen as a bijection between the
rational $\BBase$-points $\Rat^\BBase$ and the vector space
$\Vectors(\Rat)$. We write $\topoints\_$ for the inverse bijection.
A linear expression $\vec u\tadd\titerate{\Aase}$ is called
\emph{homogeneous} if:
$$\forall \vec b\in\BBase,\forall \vec x,\vec
y\in\Aase,\,\topointsc{\vec x}{\vec b}>
0\Rightarrow\topointsc{\vec y}{\vec b}\geqslant 0.$$

\begin{lemma}\label{lem:homogeneous}
  Every expression is provably equal to a finite \union of homogeneous
  expressions.
\end{lemma}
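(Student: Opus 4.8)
The plan is to reduce to a single linear expression and then homogenize it by a nested induction that fixes one basis direction at a time. First I would invoke Lemma~\ref{lem:decomp} to write the given expression as a finite \union of linear expressions $\tJoin_k \vec u_k \tadd \titerate{\Aase_k}$. Since homogeneity is a property of the starred family $\Aase_k$ alone (the translation vector plays no role in its definition), and since $\vec u_k \tadd \paren{\cdot}$ distributes over \unions by~\eqref{cka-left-distr} and Lemma~\ref{lem:add-vect}, it suffices to prove the following: for every finite $\Aase \subseteq \Vectors$, $\titerate{\Aase}$ is provably equal to a finite \union $\tJoin_j \vec w_j \tadd \titerate{\mathcal D_j}$ in which every $\mathcal D_j$ is homogeneous and, crucially, $\mathcal D_j \subseteq \iterate{\Aase}$. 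This last clause is the invariant that makes everything go through: a vector that is a non-negative integer combination of $\Aase$ has the same weak sign as $\Aase$ in every $\BBase$-coordinate where $\Aase$ is already sign-coherent, simply because $\topointsc{\cdot}{\vec b}$ is linear. Passing to sub-families drawn from $\iterate{\Aase}$ can therefore only resolve sign conflicts, never create new ones.

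Call a direction $\vec b \in \BBase$ \emph{bad} for $\Aase$ if some $\vec x \in \Aase$ has $\topointsc{\vec x}{\vec b} > 0$ while some $\vec y \in \Aase$ has $\topointsc{\vec y}{\vec b} < 0$; thus $\Aase$ is homogeneous precisely when it has no bad direction. I would prove the displayed statement by induction on the number of bad directions. If there are none we are done. Otherwise, fix a bad direction $\vec b$ and split $\Aase$ into the vectors with strictly positive, strictly negative, and zero $\vec b$-coordinate, written $\Aase^{+}, \Aase^{-}, \Aase^{0}$. The core sub-task is to rewrite $\titerate{\Aase}$ as a finite \union of terms $\vec t \tadd \titerate{\mathcal C}$ with $\mathcal C \subseteq \iterate{\Aase}$ and $\mathcal C$ \emph{coherent at $\vec b$}, meaning it contains no two members of opposite strict $\vec b$-sign. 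I would obtain this by an inner induction on the product $\mu \eqdef \Card{\Aase^{+}} \cdot \Card{\Aase^{-}}$; when $\mu = 0$ the family is already coherent at $\vec b$, so only $\mu > 0$ needs work.

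The key step drives $\mu$ down using Lemma~\ref{lem:base-vector}. When $\mu > 0$, pick $\vec a$ with $\topointsc{\vec a}{\vec b} > 0$ and $\vec c$ with $\topointsc{\vec c}{\vec b} < 0$; since these coordinates are rational and of opposite sign, there are positive integers $m, k$ for which the \emph{balanced} vector $\vec v \eqdef \scale m {\vec a} \vecplus \scale k {\vec c} \in \iterate{\Aase}$ satisfies $\topointsc{\vec v}{\vec b} = 0$. Applying Lemma~\ref{lem:base-vector} with $\Aase$ as the generating set and the point $p$ whose only nonzero coordinates are $m$ at $\vec a$ and $k$ at $\vec c$ (so that $\tovec[\Aase]{p} = \vec v$), every summand indexed outside $\set{\vec a, \vec c}$ vanishes, and after distributing the factor $\titerate{\vec v}$ and using~\eqref{eq:sum-to-prod} I obtain two linear terms whose starred families are $\set{\vec v} \join \paren{\Aase \setminus \set{\vec a}}$ and $\set{\vec v} \join \paren{\Aase \setminus \set{\vec c}}$. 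Because $\vec v$ has zero $\vec b$-coordinate it is neither positive nor negative, so the first family loses one positive generator and the second loses one negative generator; their measures are $\paren{\Card{\Aase^{+}} - 1}\Card{\Aase^{-}}$ and $\Card{\Aase^{+}}\paren{\Card{\Aase^{-}} - 1}$, both strictly below $\mu$. The inner induction hypothesis then closes the sub-task, and since each resulting family lies in $\iterate{\Aase}$ it is coherent at $\vec b$ and, by the invariant, has strictly fewer bad directions than $\Aase$; the outer induction hypothesis applied to each yields fully homogeneous families, which we recombine by distributivity.

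I expect the main obstacle to be isolating exactly this measure and this peeling operation. One must recognize that homogeneity should be attacked one coordinate at a time, that the correct progress measure for a single coordinate is the \emph{product} $\Card{\Aase^{+}}\cdot\Card{\Aase^{-}}$ rather than the dimension (which this transformation does \emph{not} decrease, since $\vec v$ is added back), and that Lemma~\ref{lem:base-vector} is precisely the instrument that removes a single conflicting generator from each branch while injecting only the sign-neutral vector $\vec v$. Maintaining the invariant $\mathcal C \subseteq \iterate{\Aase}$ throughout is what guarantees that resolving one direction never reopens another, so that the inner induction on $\mu$ and the outer induction on the number of bad directions compose cleanly.
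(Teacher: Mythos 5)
Your proof is correct and follows essentially the same route as the paper: reduce to linear expressions, fix one basis direction at a time, and use Lemma~\ref{lem:base-vector} with the sign-balanced vector $\scale m{\vec a}\vecplus\scale k{\vec c}$ to split $\titerate\Aase$ into two families each missing one conflicting generator. The only differences are bookkeeping --- the paper's nested induction runs over subsets $B\subseteq\BBase$ and the ``$\vec b$-score'' where yours runs over the number of bad directions and the product $\Card{\Aase^+}\cdot\Card{\Aase^-}$, and your invariant $\mathcal C\subseteq\iterate\Aase$ replaces the paper's direct check that $\Aase_1,\Aase_2$ remain $B$-homogeneous --- and both versions terminate for the same reasons.
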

\begin{proof}
  This proof works by double induction. Thanks to
  Lemma~\ref{lemma:decomposition}, we may write any expression as a
  finite \union of linear expressions. Therefore, it suffices to show
  that the statement holds for linear expressions. Let
  $e\eqdef\vec u\tadd\titerate{\Aase}$. We introduce two more
  definitions:
  \begin{itemize}
  \item\emph{partial homogeneity}: for a subset $B\subseteq\BBase$, $e$ is
    $B$-homogeneous if:
    $$\forall \vec b\in B,\forall \vec x,\vec
    y\in\Aase,\,\topointsc{\vec x}{\vec b}>
    0\Rightarrow\topointsc{\vec y}{\vec b}\geqslant 0.$$
  \item\emph{$\vec b$-score}: for $\vec b\in\BBase$, the $\vec b$-score of
    $e$ is the number
    $\Card{\setcompr{\vec x\in\Aase}{\topointsc{\vec x}{\vec b}\neq 0}}$.
  \end{itemize}
  
  We now prove by induction on $\Card B$ that
  $\forall B\subseteq \BBase$, any linear expression
  $\vec u\tadd\titerate{\Aase}$ is provably equal to a finite
  \union of $B$-homogeneous expressions.

  \noindent%
  $\blacktriangleright B=\emptyset$\textbf{:} the claim holds trivially, since any linear
  expression is $\emptyset$-homogeneous.

  \noindent%
  $\blacktriangleright \vec b\uplus B$\textbf{:} by induction, any expression is provably
  equal to a finite \union of $B$-homogeneous expressions, so what
  remains to show is the following: any $B$-homogeneous linear
  expression is provably equal to a finite \union of
  $\paren{\vec b\uplus B}$-homogeneous expressions. (Notice that
  being $\paren{\vec b\uplus B}$-homogeneous means being both
  $B$-homogeneous and $\vec b$-homogeneous.) This we prove by
  induction on the $\vec b$-score of the expressions.

  \textbf{$\triangleright$ $\vec b$-score $=0$:} in this case the
  expression is already $\vec b$-homogeneous, and since by assumption
  it was $B$-homogeneous, the statement holds.

  \textbf{$\triangleright$ otherwise:} if
  $\vec u\tadd\titerate{\Aase}$ is already $\vec b$-homogeneous the
  statement already holds.  Otherwise, there are
  $\vec x,\vec y\in\Aase$ such that $\topointsc{\vec x}{\vec b}>0$ and
  $\topointsc{\vec y}{\vec b}<0$. By the properties of $\Rat$, there
  are two natural numbers $n,m>0$ such that
  $$\topointsc{\scale n {\vec x}\vecplus\scale m {\vec y}}{\vec b}
  =\topointsc{\scale n {\vec x}}{\vec b}+\topointsc{\scale m {\vec y}}{\vec b}
  =n\times\topointsc{\vec x}{\vec b}+m\times\topointsc{\vec y}{\vec b}=0.$$
  Since $\scale n {\vec x}\vecplus\scale m {\vec y}\in \iterate\Aase$, we have by Lemma~\ref{lem:base-vector}:
  \begin{mathpar}
    \titerate{\Aase}\equiv \ptiterate{\scale n {\vec x}\vecplus\scale m {\vec y}}\tadd\tJoin_{\vec v\in\Aase}\paren{\vec v^{<p_{\vec v}}\tadd\tsum_{\vec w\neq\vec v}\titerate{\vec w}}\and
    \text{with }p_{\vec v}\eqdef\left\{
      \begin{array}{ll}
        n&\text{ if }\vec v=\vec x\\
        m&\text{ if }\vec v=\vec y\\
        0&\text{ otherwise}\\
      \end{array}\right..
  \end{mathpar}
  We may simplify this expression, since if
  $\vec v\neq\vec x,\vec y$ we have $p_{\vec v}=0$, so:
  \begin{align*}
    \vec v^{<p_{\vec v}}\tadd\tsum_{\vec w\neq\vec v}\titerate{\vec w}
    &=\vec v^{<0}\tadd\tsum_{\vec w\neq\vec v}\titerate{\vec w}
      =\zero\tadd\tsum_{\vec w\neq\vec v}\titerate{\vec w}
      \equiv \zero.\\
    \Rightarrow
    \titerate{\Aase}%
    &\equiv \ptiterate{\scale n {\vec x}\vecplus\scale m {\vec y}}%
      \tadd\paren{\jparen{\vec x^{<n}\tadd\tsum_{\vec w\neq\vec x}\titerate{\vec w}}%
      \tjoin\jparen{\vec y^{<m}\tadd\tsum_{\vec w\neq\vec y}\titerate{\vec w}}}.%
  \end{align*}
  Therefore we split $e=\vec u\tadd\titerate{ \Aase}$ into
  two finite families of linear expressions:
  \begin{align*}
    e
    &\equiv
      \paren{\tJoin_{\vec v\in\vec u\tadd\vec x^{<n}}\vec v\tadd\titerate{\Aase_1}}
      \tjoin
      \paren{\tJoin_{\vec v\in\vec u\tadd\vec y^{<m}}\vec v\tadd\titerate{\Aase_1}}\\
    \text{where }
    \Aase_1&\eqdef\set{\scale n {\vec x}\vecplus\scale m {\vec y}}
             \cup\paren{\Aase\setminus\set{\vec x}}\\
    \Aase_2&\eqdef\set{\scale n {\vec x}\vecplus\scale m {\vec y}}
             \cup\paren{\Aase\setminus\set{\vec y}}.
  \end{align*}
  We want to conclude by apply the induction hypothesis. To do so we
  must check that each of the linear expressions in the decomposition
  of $e$ are still $B$-homogeneous, and that their $\vec b$-score has
  decreased strictly.  For the first check, just notice that for
  $\vec a\in B$ since the sign of the $\vec a$-coordinates of $\vec x$
  and $\vec y$ is the same, the sign of
  $\scale n {\vec x}\vecplus\scale m {\vec y}$ is the same
  again. Therefore both $\Aase_1$ and $\Aase_2$ are
  $B$-homogeneous. For the second check, it follows immediately from
  the definitions that the $\vec b$-score of both $\Aase_1$ and
  $\Aase_2$ is one less than that of $\Aase$. We may thus conclude the
  proof by applying the induction hypothesis to each.\qed
\end{proof}


Finally, we show the main result of this section, namely:
\baseSplit
\begin{proof}
  Thanks to Lemma~\ref{lem:homogeneous} and
  Remark~\ref{rmk:dec-union}, it is enough to show that every
  homogeneous expression can be decomposed.  We do so by induction on
  the dimension of the expression.

  Let $e=\vec u\tadd\titerate{\Aase}$ be a homogeneous
  expression. We distinguish two cases: 
  \begin{enumerate}
  \item either $\forall \vec a\in \Aase$ we have
    $\topointsc{\vec a}{\vec b}$ is non-negative for every
    $\vec b\in\Base$ and $0$ otherwise,
  \item or there exists $\dv\in\Aase$ and $\vec b\in\BBase$ such that
    either
    \begin{orlist}
    \item\label{item:or:1} $\vec b\in \Base$ and
      $\topointsc{\dv}{\vec b}<0$, or
    \item\label{item:or:2} $\vec b\in {\bar\Base}$ and
      $\topointsc{\dv}{\vec b} < 0$, or
    \item\label{item:or:3} $\vec b\in {\bar\Base}$ and
      $\topointsc{\dv}{\vec b} > 0$.
    \end{orlist}
  \end{enumerate}
  Let us deal with each case in turn.
  \begin{enumerate}
  \item in this case, we show that $e$ can be written as a finite
    \union of expressions satisfying the premise of
    Lemma~\ref{lem:dec-compatible}, which allows us to conclude.  To
    do that, notice that for every vector $\vec a\in\Vectors$, there
    is a natural number $n_{\vec a}>0$ such that every coordinate of
    $\topoints{\scale {n_{\vec a}}{\vec a}}$ is an
    integer. Furthermore, if $\vec a\in\Aase$, then the $\bar\Base$
    coordinates of $\topoints{\scale {n_{\vec a}}{\vec a}}$ are equal
    to naught, and the $\Base$ coordinates of
    $\topoints{\scale{n_{\vec a}}{\vec a}}$ are natural numbers. This
    entails that $\scale {n_{\vec a}}{\vec a}\in\iterate\Base$. We may
    thus conclude this case using~\eqref{eq:first-iterates}:
    \ifcs
    \begin{align*}
      \vec u\tadd\titerate{ \Aase} 
      \equiv\vec u\tadd\tsum_{\vec a\in\Aase}\titerate{\vec a}
      \equiv\vec u\tadd\tsum_{\vec a\in\Aase}\paren{{\vec a}^{<n_{\vec a}}\tadd\titerate{\scale {n_{\vec a}}{\vec a}}}
      &\equiv\paren{\vec u\tadd\tsum_{\vec a\in\Aase}{\vec a}^{<n_{\vec a}}}
        \tadd\ptiterate{\tJoin_{\vec a\in\Aase}\scale {n_{\vec a}}{\vec a}}
    \end{align*}
    \else
    \begin{align*}
      \vec u\tadd\titerate{ \Aase} 
      \equiv\vec u\tadd\tsum_{\vec a\in\Aase}\titerate{\vec a}
      &\equiv\vec u\tadd\tsum_{\vec a\in\Aase}\paren{{\vec a}^{<n_{\vec a}}\tadd\titerate{\scale {n_{\vec a}}{\vec a}}}\\
      &\equiv\paren{\vec u\tadd\tsum_{\vec a\in\Aase}{\vec a}^{<n_{\vec a}}}
        \tadd\ptiterate{\tJoin_{\vec a\in\Aase}\scale {n_{\vec a}}{\vec a}}
    \end{align*}
    \fi
  \item this case as three sub-cases. Since all three can be dispatched
    in the same way, we only detail the proof in case~\ref{item:or:1},
    where we have $\dv\in\Aase$ and $\vec b\in \Base$ such that
    $\topointsc{\dv}{\vec b}<0$.
    Let $N=\left\lceil\topointsc{\vec u}{\vec b}\right\rceil+1$, and
    $\vec u'\eqdef\vec u\tadd\scale N\dv$. We rewrite $e$ as follows:
    \begin{align*}
      e=\vec u\tadd\titerate{\Aase}
      &\equiv\vec u\tadd\titerate\dv\tadd\ptiterate{\Aase\setminus\dv}\\
      &\equiv\vec u\tadd\paren{\dv^{<N}\tjoin\scale N\dv\tadd\titerate\dv}
        \tadd\ptiterate{\Aase\setminus\dv}\\
      &\equiv\vec u\tadd\dv^{<N}\tadd\ptiterate{\Aase\setminus\dv}
        \tjoin\vec u\tadd\scale N\dv
        \tadd\titerate\dv\tadd\ptiterate{\Aase\setminus\dv}\\
      &\equiv
        \begin{array}[t]{ccc}
          \paren{\underbrace{\vec u\tadd\dv^{<N}\tadd\ptiterate{\Aase\setminus\dv}}}
          &\tjoin&\paren{\underbrace{\vec u'\tadd\titerate{\Aase}}}.\\
          e'&&y
        \end{array}
    \end{align*}
    The expression $e'$ has dimension strictly smaller than
    $\Card\Aase$, so we can decompose it using the induction
    hypothesis.  We now show that $y$ does not intersect
    $\iterate\Base$, hence $\tuple{\zero,y}$ is a decomposition of
    $y$.
    Let $\vec v\in\vec u'\vecplus\iterate\Aase$. Since $e$ is
    homogeneous, and $\topointsc{\dv}{\vec b}<0$, every vector in
    $\Aase$ has non-positive $\vec b$-coordinates. It then follows
    that every vector in $\iterate\Aase$ has non-positive
    $\vec b$-coordinates. Therefore:
    \begin{align*}
      \topointsc{\vec v}{\vec b}
      =\topointsc{\vec v-\vec u'}{\vec b}+\topointsc{\vec u'}{\vec b}
      &\leqslant 0+\topointsc{\vec u'}{\vec b}
        \tag{$\vec v-\vec u'\in\iterate\Aase$}\\
      &=\topointsc{\vec u}{\vec b}+ N\times\topointsc{\dv}{\vec b}\\
      &\leqslant\topointsc{\vec u}{\vec b}+ N\times\paren{-1}=\topointsc{\vec u}{\vec b}- N.
    \end{align*}
    We know that $\topointsc{\vec u}{\vec b}<\left\lceil\topointsc{\vec u}{\vec b}\right\rceil+1=N$, so $\topointsc{\vec u}{\vec b}- N<0$, meaning $\topointsc{\vec v}{\vec b}<0$.
    If $\vec v$ were in $\iterate \Base$, then there would be a point
    $p\in\Points$ such that $\tovec p=\vec v$. By definition of
    $\topoints\_$ we have that
    $\topointsc{v}{\vec b}=\topointsc{\tovec p}{\vec b}=p(\vec
    b)\in\Nat$.  Since we have just showed that
    $\topointsc{v}{\vec b}<0$, this is impossible so
    $\vec v\notin\iterate\Base$.\qed
  \end{enumerate}
\end{proof}


\section*{Acknowledgements}

This work is supported by IRIS, UK EPSRC project EP/R006865/1.

%
%
%
%
\clearpage
\bibliographystyle{splncs04}
\bibliography{bibli}
\clearpage
\appendix
\section{Omitted proofs of Section~\ref{sec:plan}}
\label{sec:proofs:2.3}

\begin{lemma}
  For any $e\in\reg$ if $\sem e=\emptyset$, then $e\equiv \zero$.
\end{lemma}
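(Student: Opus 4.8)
The plan is to proceed by structural induction on $e$, exploiting the fact that among the six constructors only $\tjoin$ and $\tadd$ can ever produce an empty semantics, while the remaining four always denote a non-empty set. This observation organises the whole argument: three of the cases are simply vacuous.

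For the base cases, $\zero$ holds by reflexivity, whereas $\un$ and $a$ are vacuous, since $\sem\un=\set\unitvec$ and $\sem a=\set{\letvec a}$ are both non-empty, so the hypothesis $\sem e=\emptyset$ is never met. For the genuine inductive cases, I would argue as follows. For $e\tjoin f$, from $\sem{e\tjoin f}=\sem e\join\sem f=\emptyset$ one deduces $\sem e=\sem f=\emptyset$, so the induction hypothesis gives $e\equiv\zero$ and $f\equiv\zero$, and hence $e\tjoin f\equiv\zero\tjoin\zero\equiv\zero$ by \eqref{cka-plus-unit}. For $e\tadd f$, the product $\sem e\vecplus\sem f$ is empty exactly when one of the two factors is empty; by the induction hypothesis the corresponding subterm is provably $\zero$, and one concludes with \eqref{cka-seq-absorbing}, either directly when $\sem e=\emptyset$, or after a single application of commutativity \eqref{cka-seq-comm} when instead $\sem f=\emptyset$.

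The case $\titerate e$ is once more vacuous, and this is the one point worth flagging since a naive induction might try to reason about it. The iteration operator always denotes a non-empty set: taking every multiplicity $\alpha_b$ equal to $0$ in the definition of $\iterate{\sem e}$ yields the empty sum $\unitvec$, so $\unitvec\in\sem{\titerate e}$ irrespective of $\sem e$, whence $\sem{\titerate e}\neq\emptyset$ and the premise cannot hold. There is no real obstacle here; the proof is entirely routine, and the only pitfalls are to overlook that $\titerate e$ (like $\un$ and $a$) is vacuous, and to forget the commutativity step in the product case when the empty factor sits on the right.
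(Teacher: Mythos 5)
Your proof is correct and follows essentially the same structural induction as the paper's, using the same case analysis and the same axioms (\eqref{cka-plus-unit}, \eqref{cka-seq-absorbing}, \eqref{cka-seq-comm}, and the observation that $\unitvec\in\sem{\titerate f}$ always). The only cosmetic difference is that the paper strengthens the induction hypothesis to the disjunction ``either $\exists\vec v\in\sem e$ or $e\equiv\zero$'', whereas you argue directly from the implication by noting that emptiness of $\sem{e\tadd f}$ and $\sem{e\tjoin f}$ propagates to the subterms; both routes amount to the identical case-by-case reasoning.
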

\begin{proof}
  We show by induction on $e\in\reg$ that either
  $\exists \vec v\in\sem e$ or $e\equiv \zero$:
  \begin{description}
  \item[$\blacktriangleright e=\zero,\un,a$:] trivial.
  \item[$\blacktriangleright e=e_1\tadd e_2$:] if $e_1$ or $e_2$ is provably equal to
    $\zero$, then by \eqref{cka-seq-absorbing} we get $e\equiv \zero$;
    otherwise we have $\vec v_1\in\sem {e_1}$ and
    $\vec v_2\in\sem {e_2}$, hence $\vec v_1\vecplus\vec v_2\in\sem e$.
  \item[$\blacktriangleright e=e_1\tjoin e_2$:] if both $e_1$ and $e_2$ are provably equal to $\zero$, then thanks to \eqref{cka-plus-idem} $e\equiv \zero$; otherwise we have either $\vec v\in\sem {e_1}$ or $\vec v \in\sem{e_2}$, and in both cases we get $\vec v\in\sem e$;
  \item[$\blacktriangleright e=\titerate{f}$:] we have $\unitvec\in\sem e$.\qed
  \end{description}
\end{proof}

\section{Omitted proofs of Section~\ref{sec:finite-complete}}
\label{sec:proofs:3.1}

\splitfinite*
\begin{proof}
  By induction on $e$:
  \begin{description}
  \item[$\blacktriangleright \zero,\un,a$:] In each of those case, we have $e= \tJoin_{\vec v\in\sem e}\expr{\vec v}$, therefore the lemma holds by reflexivity.
  \item[$\blacktriangleright f\tjoin g$:]
    \begin{align*}
      f\tjoin g
      &\equiv \jparen{\tJoin_{\vec v\in\sem f}\expr{\vec v}}
        \tjoin\jparen{\tJoin_{\vec v\in\sem g}\expr{\vec v}}\tag{by I.H.}\\
      &\equiv \tJoin_{\vec v\in\sem f\join\sem g}\expr{\vec v}.
        \tag{by~\ref{cka-plus-ass}}
    \end{align*}
  \item[$\blacktriangleright f\tadd g$:]
    \begin{align*}
      f\tadd g
      &\equiv \jparen{\tJoin_{\vec u\in\sem f}\expr{\vec u}}
        \tadd\jparen{\tJoin_{\vec v\in\sem g}\expr{\vec v}}
        \tag{by I.H.}\\
      &\equiv 
        \tJoin_{\vec u \in\sem f,\vec v\in\sem g}\expr{\vec u}\tadd\expr{\vec v}
        \tag{by~\ref{cka-left-distr},\ref{cka-seq-comm}}\\
      &\equiv 
        \tJoin_{\vec u \in\sem f,\vec v\in\sem g}\expr{\vec u\vecplus\vec v}
        \tag{by~Lemma~\ref{lem:add-vect}}\\
      &\equiv
        \tJoin_{\vec w\in\sem {f\tadd g}}\expr{\vec w}\tag*{\qed}
    \end{align*}
  \end{description}
\end{proof}

\section{Omitted proofs of Section~\ref{sec:stuff}}
\label{sec:proofs:3.2}

\begin{lemma}
  For any $k\in\Nat$ we have $e^{<k+1}\equiv \scale k e\tjoin e^{<k}$.
\end{lemma}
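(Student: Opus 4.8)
The plan is to prove this by induction on $k$, after unfolding the definition $e^{<n}=\scale{\paren{n-1}}{\paren{e\tjoin\un}}$ (with the convention $e^{<0}=\zero$). The base case $k=0$ is immediate: the left-hand side is $e^{<1}=\scale{0}{\paren{e\tjoin\un}}=\un$, while the right-hand side is $\scale{0}{e}\tjoin e^{<0}=\un\tjoin\zero\equiv\un$ by~\eqref{cka-plus-unit}.

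For $k\geqslant 1$, I would first peel off a factor, writing $e^{<k+1}=\scale{k}{\paren{e\tjoin\un}}\equiv e^{<k}\tadd\paren{e\tjoin\un}$, and then distribute. Using left-distributivity~\eqref{cka-left-distr} together with commutativity of $\tadd$~\eqref{cka-seq-comm} to obtain right-distributivity, and simplifying $e^{<k}\tadd\un\equiv e^{<k}$ via~\eqref{cka-seq-unit}, this yields
\begin{equation*}
  e^{<k+1}\equiv\paren{e^{<k}\tadd e}\tjoin e^{<k}.
\end{equation*}
It therefore remains to identify $e^{<k}\tadd e$ with a sub-join of $\scale{k}{e}\tjoin e^{<k}$.

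The main obstacle is precisely this last identification: the product form $\scale{\paren{k-1}}{\paren{e\tjoin\un}}$ and the sum form $\scale{k}{e}\tjoin e^{<k}$ are not syntactically obviously equal, the content of the lemma being the idempotent-semiring ``binomial'' collapse $\paren{e\tjoin\un}^{k}\equiv\un\tjoin e\tjoin\cdots\tjoin\scale{k}{e}$. To expose this cleanly, I would prove as a preliminary the power-sum characterisation $e^{<n}\equiv\tJoin_{0\leqslant j<n}\scale{j}{e}$ by a separate induction on $n$ (the degenerate case $n=0$ giving $\zero$ on both sides as an empty join). Its inductive step again uses $e^{<n+1}\equiv\paren{e^{<n}\tadd e}\tjoin e^{<n}$, combined with $\paren{\tJoin_{0\leqslant j<n}\scale{j}{e}}\tadd e\equiv\tJoin_{1\leqslant j\leqslant n}\scale{j}{e}$ (by distributivity and associativity~\eqref{cka-seq-ass} of $\tadd$, each summand $\scale{j}{e}\tadd e$ becoming $\scale{\paren{j+1}}{e}$), and idempotency~\eqref{cka-plus-idem} to merge the two overlapping ranges into $\tJoin_{0\leqslant j\leqslant n}\scale{j}{e}$.

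With the power-sum characterisation in hand, the lemma follows in one line by splitting off the top summand and reapplying the characterisation to the remaining range: $e^{<k+1}\equiv\tJoin_{0\leqslant j\leqslant k}\scale{j}{e}\equiv\scale{k}{e}\tjoin\tJoin_{0\leqslant j<k}\scale{j}{e}\equiv\scale{k}{e}\tjoin e^{<k}$. I expect the only delicate points to be the bookkeeping of index ranges and the boundary cases ($k=0$, empty joins equated with $\zero$), rather than any genuinely hard algebra.
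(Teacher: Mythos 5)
Your argument is essentially correct, but it takes a genuinely different route from the paper's. The paper proves the two inequalities of the equivalence separately by antisymmetry: the direction $\scale k e\tjoin e^{<k}\leqq e^{<k+1}$ is obtained directly by padding each side up to $\scale k{\paren{e\tjoin\un}}$, and only the converse direction $e^{<k+1}\leqq\scale k e\tjoin e^{<k}$ is proved by induction on $k$, using the same peel-and-distribute identity $e^{<k+2}\equiv\jparen{e\tadd e^{<k+1}}\tjoin e^{<k+1}$ that you isolate. You instead route through the stronger equational normal form $e^{<n}\equiv\tJoin_{0\leqslant j<n}\scale j e$ and derive the statement by splitting off the top summand. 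Your version costs an extra lemma but yields a reusable characterisation and a symmetric, purely equational derivation; the paper's version is shorter but leaves the ``binomial collapse'' implicit in the interplay of the two inequalities. One concrete point to repair before this is airtight: the inductive step of your power-sum lemma relies on $e^{<n+1}\equiv\jparen{e^{<n}\tadd e}\tjoin e^{<n}$, which is \emph{false} at $n=0$ (it would give $e^{<1}\equiv\zero\tadd e\tjoin\zero\equiv\zero$ rather than $\un$), since the factorisation $\scale n{\paren{e\tjoin\un}}=\paren{e\tjoin\un}\tadd\scale{\paren{n-1}}{\paren{e\tjoin\un}}$ requires $n\geqslant 1$. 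You must therefore take both $n=0$ and $n=1$ as base cases of that induction (both are immediate: $\zero\equiv\zero$ and $\un\equiv\scale 0 e$); with that adjustment the bookkeeping you flag goes through and the final one-line derivation, including the $k=0$ boundary via \eqref{cka-plus-unit}, is sound.
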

\begin{proof}
  First, notice that the right to left inequality holds directly:
  \begin{align*}
    \scale k e\tjoin e^{<k}
    =\scale k e\tjoin\scale {\paren{k-1}}{\paren{e\tjoin\un}}
    &\equiv\scale k e\tjoin\jparen{\scale {\paren{k-1}}{\paren{e\tjoin\un}}\tadd\un}\\
    &\leqq \scale k {\paren{e\tjoin \un}}\tjoin\jparen{\scale {\paren{k-1}}{\paren{e\tjoin\un}}\tadd\paren{e\tjoin \un}}\\
    &\equiv \scale k{\paren{e\tjoin \un}}= e^{<k+1}.
  \end{align*}

  We now prove the conversion inequality by induction on $k$.
  \begin{description}
  \item[case $k=0$:] we have
    $$e^{<k+1}=\scale 0 {\paren{e\tjoin \un}}=\un\leqq \un\tjoin e^{<0}=\scale 0 e\tjoin e^{<0}.$$
  \item[case $k+1$:] in this case, we have
    \ifcs
    \begin{align*}
      e^{<k+2}=\scale {\paren{k+1}}{\paren{e\tjoin\un}}
      &=\paren{e\tjoin \un}\tadd e^{<k+1}\\
      &\equiv e\tadd e^{<k+1}\tjoin e^{<k+1}\\
      &\leqq e\tadd \paren{\scale k e \tjoin e^{<k}}\tjoin e^{<k+1}\tag{by I.H.}\\
      &\equiv\scale {\paren{k+1}} e\tjoin \jparen{e\tadd e^{<k}}\tjoin e^{<k+1}\\
      &\leqq\scale {\paren{k+1}} e\tjoin \jparen{\paren{e\tjoin \un}\tadd e^{<k}}\tjoin e^{<k+1}\\
      &\equiv\scale {\paren{k+1}} e\tjoin e^{<k+1}\tjoin e^{<k+1}\equiv \scale  {\paren{k+1}} e\tjoin e^{<k+1}.\tag*{\qed}
    \end{align*}
    \else
    \begin{align*}
      e^{<k+2}=\scale {\paren{k+1}}{\paren{e\tjoin\un}}
      &=\paren{e\tjoin \un}\tadd e^{<k+1}\\
      &\equiv e\tadd e^{<k+1}\tjoin e^{<k+1}\\
      &\leqq e\tadd \paren{\scale k e \tjoin e^{<k}}\tjoin e^{<k+1}\tag{by I.H.}\\
      &\equiv\scale {\paren{k+1}} e\tjoin \jparen{e\tadd e^{<k}}\tjoin e^{<k+1}\\
      &\leqq\scale {\paren{k+1}} e\tjoin \jparen{\paren{e\tjoin \un}\tadd e^{<k}}\tjoin e^{<k+1}\\
      &\equiv\scale {\paren{k+1}} e\tjoin e^{<k+1}\tjoin e^{<k+1}\\
      &\equiv \scale  {\paren{k+1}} e\tjoin e^{<k+1}.\tag*{\qed}
    \end{align*}
    \fi
  \end{description}
\end{proof}
\begin{lemma}
\begin{equation*}
  \ptiterate{e\tjoin f}\equiv\titerate e\tadd\titerate f\tag{\ref{eq:sum-to-prod}}
\end{equation*}
\end{lemma}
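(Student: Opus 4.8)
The plan is to prove the two inequalities $\titerate e\tadd\titerate f\leqq\ptiterate{e\tjoin f}$ and $\ptiterate{e\tjoin f}\leqq\titerate e\tadd\titerate f$ separately, and conclude by antisymmetry of $\leqq$. Both directions ultimately rest on the least-fixpoint rule \eqref{cka-lfp}, instantiated at a well-chosen pre-fixpoint.

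For the first inequality I would begin by recording the monotonicity of the star operator, i.e. that $g\leqq h$ implies $\titerate g\leqq\titerate h$. This itself follows from \eqref{cka-lfp}: since $g\leqq h$ we have $g\tadd\titerate h\leqq h\tadd\titerate h\leqq\titerate h$ using \eqref{cka-star}, so $\titerate g\tadd\titerate h\leqq\titerate h$, which together with $\un\leqq\titerate h$ (again from \eqref{cka-star}) yields $\titerate g\equiv\titerate g\tadd\un\leqq\titerate h$. Applying this to $e\leqq e\tjoin f$ and $f\leqq e\tjoin f$ gives $\titerate e,\titerate f\leqq\ptiterate{e\tjoin f}$, whence, using \eqref{eq:star},
$$\titerate e\tadd\titerate f\leqq\ptiterate{e\tjoin f}\tadd\ptiterate{e\tjoin f}\equiv\ptiterate{e\tjoin f}.$$

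For the reverse inequality the idea is to exhibit $\titerate e\tadd\titerate f$ as a pre-fixpoint of the map $x\mapsto\paren{e\tjoin f}\tadd x$. Distributing with \eqref{cka-left-distr} and reassociating with \eqref{cka-seq-comm}, one gets $\paren{e\tjoin f}\tadd\titerate e\tadd\titerate f\equiv\jparen{\paren{e\tadd\titerate e}\tadd\titerate f}\tjoin\jparen{\titerate e\tadd\paren{f\tadd\titerate f}}$, and since $e\tadd\titerate e\leqq\titerate e$ and $f\tadd\titerate f\leqq\titerate f$ by \eqref{cka-star}, monotonicity of $\tadd$ bounds each summand by $\titerate e\tadd\titerate f$. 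Hence $\paren{e\tjoin f}\tadd\paren{\titerate e\tadd\titerate f}\leqq\titerate e\tadd\titerate f$, so \eqref{cka-lfp} gives $\ptiterate{e\tjoin f}\tadd\titerate e\tadd\titerate f\leqq\titerate e\tadd\titerate f$. Since $\un\leqq\titerate e\tadd\titerate f$, we conclude $\ptiterate{e\tjoin f}\leqq\ptiterate{e\tjoin f}\tadd\titerate e\tadd\titerate f\leqq\titerate e\tadd\titerate f$.

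The routine semiring bookkeeping — distributing products over joins, reassociating under commutativity, and using monotonicity of $\tadd$ to bound each summand — is entirely straightforward. The only genuinely delicate point, and the crux of the argument, is the correct instantiation of \eqref{cka-lfp} in each direction: taking $\titerate h$ as the pre-fixpoint in the auxiliary monotonicity step, and $\titerate e\tadd\titerate f$ in the main inequality, then verifying in each case that the chosen term really is a pre-fixpoint. Everything else reduces to the axioms of a commutative idempotent semiring.
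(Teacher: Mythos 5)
Your proof is correct and follows essentially the same route as the paper: the easy direction via monotonicity of $\titerate\_$ and $\titerate e\tadd\titerate f\leqq\ptiterate{e\tjoin f}\tadd\ptiterate{e\tjoin f}\equiv\ptiterate{e\tjoin f}$, and the hard direction by showing $\paren{e\tjoin f}\tadd\paren{\titerate e\tadd\titerate f}\leqq\titerate e\tadd\titerate f$ and applying \eqref{cka-lfp}. The only difference is that you spell out the derivation of star-monotonicity from \eqref{cka-lfp}, which the paper simply attributes to $\leqq$ being a precongruence; that is a harmless (indeed slightly more careful) elaboration.
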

\begin{proof}
  Since $e\leqq e\tjoin f$ and $f\leqq e\tjoin f$, and since $\leqq$ is
  a precongruence, we get $\titerate e\leqq \ptiterate{e\tjoin f}$ and
  $\titerate f\leqq \ptiterate{e\tjoin f}$, hence:
  $$\titerate e\tadd\titerate f\leqq\ptiterate{e\tjoin
    f}\tadd\ptiterate{e\tjoin f}\equiv \ptiterate{e\tjoin f}.$$

  For the converse direction, we start by showing the following inequality:
  \begin{equation}
    \paren{e\tjoin f}\tadd\paren{\titerate e\tadd\titerate f}\leqq \titerate e\tadd\titerate f.\eqproof{1}
  \end{equation}
  \begin{align*}
    \paren{e\tjoin f}\tadd\jparen{\titerate e\tadd\titerate f}
    &\equiv \jparen{e\tadd\jparen{\titerate e\tadd\titerate f}}
      \tjoin\jparen{f\tadd\jparen{\titerate e\tadd\titerate f}}\\
    &\equiv\jparen{\jparen{e\tadd\titerate e}\tadd\titerate f}
      \tjoin\jparen{\titerate e\tadd\jparen{f\tadd\titerate f}}\\
    &\leqq\jparen{\titerate e\tadd\titerate f}
      \tjoin\jparen{\titerate e\tadd\titerate f}\equiv\titerate e\tadd\titerate f.
  \end{align*}
  By~\eqref{cka-lfp}, the inequality~\eqref{eq:1} entails that
  $\ptiterate{e\tjoin f}\tadd\paren{\titerate e\tadd\titerate f}\leqq \titerate e\tadd\titerate f.$
  We may thus conclude, because since $\un\leqq\titerate e\tadd\titerate f$ we have:
  $$\ptiterate {e\tjoin f}\equiv\ptiterate {e\tjoin f}\tadd\un\leqq\ptiterate{e\tjoin f}\tadd\jparen{\titerate e\tadd\titerate f}\leqq \titerate e\tadd\titerate f.$$

  \noindent%
  We have therefore proved both inequalities, we may conclude by
  antisymmetry.\qed
\end{proof}

\basevector*
\begin{proof}
  The inclusion from right to left is trivial, so we focus on the
  other one. By~\eqref{cka-lfp}, and given\footnotemark~that the right-hand side is
  provably larger than $\un$, this amounts to proving:
  \footnotetext{For this to hold it is necessary to check that
    $p\neq\unitvec$, i.e. $\exists i,\,p_i\neq 0$.}
  \begin{align*}
    &\Base\tadd\titerate{\tovec p}\tadd\tJoin_{i}\paren{\vec u_i^{<p_i}\tadd\tsum_{i\neq j}\titerate{\vec u_j}}\leqq\titerate{\tovec p}\tadd\tJoin_{i}\paren{\vec u_i^{<p_i}\tadd\tsum_{i\neq j}\titerate{\vec u_j}}\\
    \Leftrightarrow~&\forall i,k:\,
                      \vec u_k\tadd\titerate{\tovec p}\tadd\vec u_i^{<p_i}\tadd\tsum_{i\neq j}\titerate{\vec u_j}\leqq\titerate{\tovec p}\tadd\tJoin_{i}\paren{\vec u_i^{<p_i}\tadd\tsum_{i\neq j}\titerate{\vec u_j}}
  \end{align*}
  We consider two cases, depending on $i\stackrel{?}{=}k$:

  \noindent%
  $\blacktriangleright i\neq k$\textbf{:} In this case, we have $\vec u_k\in\setcompr{\vec u_j}{j\neq i}$, hence
  $\vec u_k\tadd\tsum_{i\neq j}\titerate{\vec u_j}\leqq\tsum_{i\neq j}\titerate{\vec u_j}$, so we obtain:
  \begin{align*}
    \vec u_k\tadd\titerate{\tovec p}\tadd\vec u_i^{<p_i}\tadd\tsum_{i\neq j}\titerate{\vec u_j}
    &\leqq\titerate{\tovec p}\tadd\vec u_i^{<p_i}\tadd\tsum_{i\neq j}\titerate{\vec u_j}\\
    &\leqq\titerate{\tovec p}\tadd\tJoin_{i}\paren{\vec u_i^{<p_i}\tadd\tsum_{i\neq j}\titerate{\vec u_j}}.
  \end{align*}

  \noindent%
  $\blacktriangleright i=k$\textbf{:} In this case, we observe that since
  $\vec u_i\tadd\vec u_i^{<n}\leqq \vec
  u_i^{<n}\tjoin\scale n {\vec u_i}$, we have:
  \ifcs
  \begin{align*}
    &\vec u_i\tadd\titerate{\tovec p}\tadd\vec u_i^{<p_i}\tadd\tsum_{i\neq j}\titerate{\vec u_j}
      \leqq
      \titerate{\tovec p}\tadd\vec u_i^{<p_i}\tadd\tsum_{i\neq j}\titerate{\vec u_j}\tjoin
      \titerate{\tovec p}\tadd\scale{p_i}{\vec u_i}\tadd\tsum_{i\neq j}\titerate{\vec u_j}
  \end{align*}
  \else
  \begin{align*}
    &\vec u_i\tadd\titerate{\tovec p}\tadd\vec u_i^{<p_i}\tadd\tsum_{i\neq j}\titerate{\vec u_j}
    \\ \leqq~&
               \titerate{\tovec p}\tadd\vec u_i^{<p_i}\tadd\tsum_{i\neq j}\titerate{\vec u_j}\tjoin
               \titerate{\tovec p}\tadd\scale{p_i}{\vec u_i}\tadd\tsum_{i\neq j}\titerate{\vec u_j}
  \end{align*}
  \fi
  \noindent%
  Since the first term is smaller than our goal, we focus on the second one:
  \begin{align*}
    \ifcs \else & \fi
                  \titerate{\tovec p}\tadd\scale{p_i}{\vec u_i}\tadd\tsum_{i\neq j}\titerate{\vec u_j}
                  \ifcs \else \\ \fi
    \equiv~&
             \titerate{\tovec p}\tadd\scale{p_i}{\vec u_i}\tadd\tsum_{i\neq j}\paren{\vec u_j^{<p_j}\tjoin\jparen{\scale{p_j}{\vec u_j}\tadd\titerate{\vec u_j}}}
             \tag{by~\ref{eq:n-or-more}}\\
    \equiv~& \tJoin_{I\subseteq\set{1,\dots,n}\setminus i}
             \titerate{\tovec p}\tadd\scale{p_i}{\vec u_i}\tadd\tsum_{\mbox{\scriptsize
             $\begin{array}{c}
               j\neq i\\j\notin I
             \end{array}$}
    }\vec u_j^{<p_j}\tadd\tsum_{j\in I}\paren{\scale{p_j}{\vec u_j}\tadd\titerate{\vec u_j}}\\
    \equiv~& \tJoin_{I\subsetneq\set{1,\dots,n}\setminus i}
             \titerate{\tovec p}\tadd\scale{p_i}{\vec u_i}\tadd\tsum_{\mbox{\scriptsize
             $\begin{array}{c}
               j\neq i\\j\notin I
             \end{array}$}
    }\vec u_j^{<p_j}\tadd\tsum_{j\in I}\scale{p_j}{\vec u_j}\tadd\tsum_{j\in I}\titerate{\vec u_j} \\
                &\tjoin  \titerate{\tovec p}\tadd\scale{p_i}{\vec u_i}\tadd\tsum_{i\neq j}\scale{p_j}{\vec u_j}\tadd\tsum_{i\neq j}\titerate{\vec u_j}
  \end{align*}
  We consider those two terms separately:
  \begin{itemize}
  \item For the first term, consider
    $I\subsetneq\set{1,\dots,n}\setminus i$.  Since
    $I\neq\set{1,\dots,n}\setminus i$, there is an index $k$ such that
    $k\neq i$ and $k\notin I$.  Observe that we get:
    \begin{mathpar}
      \forall j\in I,\,u_j\leqq \tsum_{j\neq k}\titerate{\vec u_j}
      \and
      \scale{p_i}{\vec u_i}\leqq \tsum_{j\neq k}\titerate{\vec u_j}
      \and
      \tsum_{\mbox{\scriptsize$
        \begin{array}{c}
          j\neq i\\j\notin I
        \end{array}$}}\vec u_j^{<p_j}\equiv\vec u_k^{<p_k}\tadd\tsum_{\mbox{\scriptsize$
        \begin{array}{c}
          j\neq i,k\\j\notin I
        \end{array}$}}\vec u_j^{<p_j}\leqq\vec u_k^{<p_k}\tadd\tsum_{j\neq k}\titerate{\vec u_j}.
    \end{mathpar}
    Therefore we obtain
    \begin{align*}
      &\titerate{\tovec p}\tadd\scale{p_i}{\vec u_i}\tadd\tsum_{\mbox{\scriptsize$
        \begin{array}{c}
          j\neq i\\j\notin I
        \end{array}$}}\vec u_j^{<p_j}\tadd\tsum_{j\in I}\scale{p_j}{\vec u_j}\tadd\tsum_{j\in I}\titerate{\vec u_j}\\
      \leqq~&
              \titerate{\tovec p}\tadd   \tsum_{k\neq j}\titerate{\vec u_j}
              \tadd\vec u_k^{<p_k}\tadd\tsum_{k\neq j}\titerate{\vec u_j}
              \tadd\tsum_{k\neq j}\titerate{\vec u_j}
              \tadd\tsum_{k\neq j}\titerate{\vec u_j}\\
      \equiv~&\titerate{\tovec p}
               \tadd\vec u_k^{<p_k}
               \tadd\tsum_{k\neq j}\titerate{\vec u_j}
               \leqq\titerate{\tovec p}\tadd
               \tJoin_{k}
               \paren{\vec u_k^{<p_k}
               \tadd\tsum_{k\neq j}\titerate{\vec u_j}}.
    \end{align*}
  \item for the second term, i.e.
    $\titerate{\tovec p}\tadd\scale{p_i}{\vec u_i}\tadd\tsum_{i\neq
      j}\scale{p_j}{\vec u_j}\tadd\tsum_{i\neq j}\titerate{\vec u_j}$,
    first we notice:
    $$\scale{p_i}{\vec u_i}\tadd\tsum_{i\neq
      j}\scale{p_j}{\vec u_j}\equiv\tsum_j\scale{p_j}{\vec u_j} \equiv\tovec p.$$
    Therefore we have
    \begin{align*}
      &\titerate{\tovec p}\tadd\scale{p_i}{\vec u_i}\tadd\tsum_{i\neq
        j}\scale{p_j}{\vec u_j}\tadd\tsum_{i\neq j}\titerate{\vec u_j}
        \leqq
        \titerate{\tovec p}\tadd\tovec p\tadd\tsum_{i\neq j}\titerate{\vec u_j}\\
      \leqq~&
              \titerate{\tovec p}\tadd\tsum_{i\neq j}\titerate{\vec u_j}
              \leqq\titerate{\tovec p}\tadd
              \vec u_i^{<p_i}
              \tadd\tsum_{i\neq j}\titerate{\vec u_j}\\
      \leqq~&\titerate{\tovec p}\tadd
              \tJoin_{k}
              \paren{\vec u_k^{<p_k}
              \tadd\tsum_{k\neq j}\titerate{\vec u_j}}.\tag*{\qed}
    \end{align*}
  \end{itemize}
\end{proof}

\section{Omitted proofs of Section~\ref{sec:rat}}
\label{sec:proofs:3.3}

\linearindep*
\begin{proof}
  \begin{description}
  \item[($\Rightarrow$)] Suppose $\Base$ is independent. By our
    definition this means that $\tovec\_$ is injective, i.e. for any
    pair of $\Base$-points $\alpha,\beta\in\Points$,
    $\tovec\alpha=\tovec\beta\Rightarrow \alpha=\beta$.

    Let $p\in\Rat^\Base$ be a rational $\Base$-point such that
    $$\vsum_{\vec u\in\Base}\scale{p\paren{\vec u}}{\vec u}=\unitvec.$$
    First, we use the fact that since
    $\setcompr{p\paren{\vec u}}{\vec u\in\Base}$ is a finite set of
    rational numbers, there exists a natural number $N$ such that for
    any $\vec u\in \Base$ the number $N\times p\paren{\vec u}$ is an
    integer.
    We now define two points $\alpha,\beta\in\Points$:
    \begin{align*}
      \alpha&\eqdef\brack{\vec u\mapsto\left\{
              \begin{array}{ll}
                N\times p\paren{\vec u}&\text{ if }p\paren{\vec u}>0\\
                0&\text{ otherwise}\\
              \end{array}
      \right.}\\
      \beta&\eqdef\brack{\vec u\mapsto\left\{
             \begin{array}{ll}
               -N\times p\paren{\vec u}&\text{ if }p\paren{\vec u}\leqslant 0\\
               0&\text{ otherwise}\\
             \end{array}
      \right.}
    \end{align*}
    It is now a simple exercise to check that
    $\tovec\alpha=\tovec\beta$, which means that $\alpha=\beta$. By
    unfolding the definitions, this implies that
    $\alpha=\beta=p=\unitvec$.
  \item[($\Leftarrow$)] Now assume that $\Base$ is linearly
    independent, and let $\alpha,\beta\in\Points$ such that
    $\tovec \alpha=\tovec \beta$. Now, if we define $p=\alpha-\beta$,
    we get that:
    $$\vsum_{\vec u\in\Base}\scale{p\paren{\vec u}}{\vec u}=\vsum_{\vec u\in\Base}\scale{\alpha\paren{\vec u}}{\vec u}-\vsum_{\vec u\in\Base}\scale{\beta\paren{\vec u}}{\vec u}=\tovec \alpha-\tovec \beta=\unitvec.$$
    Since $\Base$ is linearly independent, $p$ must be uniformly zero,
    i.e. $\alpha=\beta$.\qed
  \end{description}
\end{proof}

\end{document}